\newcommand{\NN}{\mathbb{N}}
\newcommand{\PP}{\mathcal{P}}
\newcommand{\Jpi}{J_{\Pi}}
\newtheorem{thm}{Theorem}
\newtheorem{alg}{Algorithm}
\newtheorem{defi}{Definition}
\newtheorem{lem}{Lemma}
\newtheorem*{vois}{Neighbouring function}
\newtheorem{proc}{Procedure}
\DeclareMathOperator{\complete}{\text{\textsc{complete}}}
\DeclareMathOperator{\enum}{\text{\textsc{enum}}}
\DeclareMathOperator{\prox}{\Tilde{\cap}}
\DeclareMathOperator{\neigh}{\text{\textsc{neighbours}}}
\DeclareMathOperator{\bigo}{\mathcal{O}}
\author{Caroline Brosse$^1$, Aurélie Lagoutte$^1$, Vincent Limouzy$^1$, Arnaud Mary$^2$ and Lucas Pastor$^1$}
\title{Efficient enumeration of maximal split subgraphs and induced sub-cographs  and related classes
\thanks{This research was partially financed by the French government IDEX-ISITE 
initiative 16-IDEX-0001 (CAP 20-25). - A. Lagoutte and A. Mary were supported by ANR Project GrR (ANR-18-CE40-0032)
C. Brosse, A. Lagoutte, V. Limouzy and L. Pastor were supported by the ANR project GRALMECO (ANR-21-CE48-0004-01).
}
}
\begin{document}

\maketitle{}

\begin{center}
\date{1. {\sc Limos}, Univ. Clermont-Auvergne,\\ 2. {\sc Lbbe}, Univ. Claude Bernard- Lyon 1\\
\today}
\end{center}

\begin{abstract}
In this paper, we are interested in algorithms that take in input an arbitrary graph $G$, and that enumerate in output all the (inclusion-wise) maximal ``subgraphs'' of $G$ which fulfill a given property $\Pi$.
All over this paper, we study several different properties $\Pi$, and the notion of subgraph under consideration (induced or not) will vary from a result to another. 

More precisely, we present efficient algorithms to list all maximal split subgraphs, maximal induced cographs and maximal threshold graphs  of a given input graph.
All the algorithms presented here run in polynomial delay, and moreover for split graphs it only requires polynomial space.
In order to develop an algorithm for maximal split (edge-)subgraphs, we establish a bijection between the maximal split subgraphs and the maximal stable sets of an auxiliary graph.
For cographs and threshold graphs, the algorithms rely on a framework recently introduced by Conte \& Uno \cite{conte-uno} called \emph{Proximity Search}.
Finally we consider the extension problem, which consists in deciding if there exists a maximal induced subgraph satisfying a property $\Pi$ that contains a set of prescribed vertices and that avoids another set of vertices.
We show that this problem is NP-complete for every non-trivial hereditary property $\Pi$. We extend the hardness result to some specific edge version of the extension problem.
\end{abstract}

\section{Introduction}
\label{sec:intro}
In this paper, we are interested in algorithms that take in input an arbitrary graph $G$, and that enumerate in output all the (inclusion-wise) maximal ``subgraphs'' $G'$ of $G$ that fulfill a given property $\Pi$.
All over this paper, we study several different graph properties $\Pi$, which all are recognisable in polynomial time and hereditary, where a graph property is \emph{hereditary} if for any graph satisfying it, all its induced subgraphs also do so.
The input graph $G$ will never be restricted to a specific class of graphs.
The notion of subgraph under consideration (induced or not) will vary from a result to another.

In some applications, for example database search \cite{YYH05}, network analysis \cite{GK07}, or bioinformatics \cite{Dam06,Mar15}, the classical combinatorial approach consisting in finding one single optimal solution is not completely relevant, because the most useful answer is the list of all solutions, instead of a single one. This lead to the design of enumerating algorithms. In particular,
 many algorithms have been developed to find specific maximal subgraphs such 
as maximal stable sets \cite{tsukiyama,avis-fukuda}, spanning trees \cite{ReadT75,GabowM78}, 
or maximal matchings \cite{Uno97,Uno01} to name a few (a more exhaustive list can be obtained in Wasa's survey \cite{Wasa16}).  
For all these problems, the algorithms which were designed are very efficient: they run in polynomial delay and usually only require polynomial space.
More recently, authors started to look at more general subgraphs.
For example, Calamoneri \emph{et al.} \cite{CalamoneriGMSS16} considered  maximal chain graphs in a 
bipartite graph, and Conte \emph{et al.} \cite{conte,conte-uno} considered several graph classes such as 
maximal chordal graphs, maximal bipartite graphs, or maximal $k$-degenerate graphs inside an arbitrary graph.

As we are potentially dealing with a large number of objects, the classical notion 
of efficiency of  an algorithm is no longer relevant. It is often the case that the 
number of solutions is exponential in the size of the input, hence it 
is hopeless to look for polynomial-time algorithms.  In the following we will 
express the complexity of the algorithms in terms of the input size and the output 
size. Such an approach is often called \emph{output sensitive} in the literature 
(in opposition to \emph{input sensitive} where only the input size is taken into account).
For that reason, several complexity classes were introduced by Johnson \emph{et. al.} \cite{johnson-yannakakis}
to capture the notion of efficiency for enumeration algorithms. 
The natural translation of polynomial time complexity into the field of enumeration problems
is called output polynomial: an algorithm is said to be \emph{output polynomial} if 
the time necessary to list all the objects depends polynomially on the size of the input and the 
size of the output. However, this notion is not completely satisfying: since we deal with 
a large number of objects, the time to produce the first solution might sometimes 
be exponential in the input size. For this reason, a refinement of the class was introduced: an algorithm is said to be \emph{incremental polynomial time} when the time necessary to produce a new solution depends polynomially on the size of the input and the number of
solutions which have already been found. 
It is possible to further refine this notion by eliminating  the dependence on the 
number of already found solutions. An algorithm is said to have \emph{polynomial delay}
when the time necessary to produce a new solution is bounded by a polynomial in the size of the input.

The enumeration of some maximal induced subgraphs is related to another important problem in 
enumeration, namely the enumeration of minimal transversals of a hypergraph. This problem 
is also known as the dualisation of monotone boolean formulas. The best algorithm 
for this problem is due to Fredman and Khachiyan \cite{FredmanK96} and runs in quasi-polynomial time. 
%
Enumerating the maximal induced subgraphs of a hereditary class of graphs is equivalent 
to finding all the minimal subsets of vertices that intersect all the minimal forbidden 
induced subgraphs contained in the original graph. 

Thanks to the result of Eiter  \emph{et. al.} \cite{EiterG95} 
on the enumeration of minimal transversals of hypergraphs of bounded edge size, 
it is straightforward to obtain an incremental algorithm for the enumeration 
of maximal induced subgraphs lying in $\mathcal{C}$, where $\mathcal{C}$ is a hereditary class of graphs characterised by a finite number of forbidden subgraphs.
Indeed, a minimal transversal of the hypergraph formed by the forbidden subgraphs corresponds 
to a minimal set of vertices, of which the removal from the original graph yields a maximal 
subgraph in the desired class $\mathcal{C}$.
If in addition the considered class $\mathcal{C}$ is monotone (\emph{i.e.} closed under edge removal), 
the maximal subgraphs can be obtained in a similar way.
\\

The study of such enumeration problems gave rise to efficient algorithmic frameworks such as \emph{Reverse search} introduced 
by Avis \& Fukuda \cite{avis-fukuda}.
Reverse search provides a general strategy to explore the space of solutions.
Another useful technique called \emph{Binary partition} \cite{ReadT75} or sometimes \emph{Flashlight search}  relies on what we call the \emph{extension problem}.
For maximal induced subgraphs, Cohen \emph{et al.} \cite{cohen} devised an interesting framework which manages to link the complexity of enumerating maximal induced subgraphs that fulfil a hereditary property $\Pi$ to the complexity of a so-called \emph{restricted problem} \cite{lawler}, defined as follows.
Given a graph $H$ that fulfils a hereditary property $\Pi$ and an additional vertex $v$ not in $H$ that is connected to 
some vertices in $H$, the restricted problem consists in asking for the list of all the maximal induced subgraphs of $H+v$ that 
fulfil property $\Pi$.
If the restricted problem admits a polynomial number of solutions (and if these solutions can be found 
in polynomial time), then a polynomial-delay and polynomial space enumeration algorithm can be derived for the general problem. 
They also present analogous results for incremental and output polynomial-time algorithms, which we will not develop here.

Recently, Conte \& Uno introduced a new framework called \emph{Proximity Search} \cite{conte-uno}.
This framework is well-suited for enumerating maximal subgraphs, coping with some of the limitations 
of the \emph{restricted problem} approach, but does not always guarantee polynomial space.
\\


\begin{table}[ht]
    \centering
    \begin{tabular}{|c||c|c|c|}
    \hline
  Prop. $\Pi$:  & Max. induced subgraphs & Max. edge-subgraphs & Min. edge-supergraphs  \\
  being a.... & & {\small also called} min. deletions & {\small also called} min. completions \\\hline
 Split graph & poly. delay, poly. space & poly. delay, poly. space &  poly. delay, poly. space \\
 & Cao \cite{cao2020} & Theorem \ref{thm:split completions}  & Theorem \ref{thm:split completions} \\ 
  &  \textit{via} \emph{restricted problem} & \textit{via} self-complementarity & \textit{via} ad-hoc bijection  \\ 
  & & Section \ref{sec:split} & Section \ref{sec:split}   \\ \hline
  
   Cograph & poly. delay &  &   \\
 & Theorem \ref{thm: induced cographs} & Open  & Open \\ 
  &  \textit{via} \emph{Proxi. Search} &  &  \\
  & Section \ref{sec:cograph} & & \\ \hline
  
  
  Threshold & poly. delay, poly. space & poly. delay & poly. delay \\
  graph & Cao \cite{cao2020} & Theorem \ref{thm: threshold deletions}  & Theorem \ref{thm: threshold deletions} \\ 
  & \textit{via} \emph{restricted problem} & \textit{via} \emph{Proxi. Search} & \textit{via} self-complementarity \\ 
   & & Section \ref{sec:threshold} & Section \ref{sec:threshold} \\\hline
  
  
    \end{tabular}
    \caption{Summary of known and new algorithms that, given an input graph $G$, enumerate all maximal ``subgraphs'' (or minimal supergraphs) of $G$ that fulfil a property $\Pi$, where $\Pi$ describes a subclass of cographs or a subclass of chordal graphs}
    \label{tab:results}
\end{table}

The results of this paper  split into two categories: first, polynomial-delay enumeration algorithms that are summarised in Table \ref{tab:results} (and contextualised in the following of this introduction) ; secondly, negative results concerning the extension problem. 

For all the provided algorithms, we fix a property $\Pi$ and
a notion of ``subgraph'' (resp. ``supergraph'') and
the goal is, given an arbitrary input graph $G$, to enumerate all inclusion-wise maximal ``subgraphs'' (resp. minimal ``supergraphs'') of $G$.
Each property $\Pi$ under consideration is hereditary and recognisable in polynomial time, and moreover corresponds to a very structured class of graphs, all of which being subclasses of perfect graphs. Because of the multiple applications of chordal graphs and cographs, we focused on them and designed algorithm when $\Pi$ describes 
split graphs (subclass of chordal graphs); 
or cographs themselves;
or threshold graphs (subclass of both chordal graphs and cographs).
Although polynomial-delay algorithms are already known for the enumeration of minimal chordal deletions and maximal induced chordal subgraphs \cite{conte-uno}, it would be of great interest to provide such an algorithm for minimal chordal completions, also known as minimal triangulations.

Observe that the three aforementioned properties $\Pi$ are closed under adding isolated vertices.
Hence, given a graph $G$, all inclusion-wise maximal subgraphs fulfilling $\Pi$ must contain all vertices of $G$.
In other words, all the subgraphs under interest can be described by the set of edges that are removed from $G$, hence we will call them \emph{edge-subgraphs}, or \emph{deletions} of $G$.
In the same fashion, when all cliques fulfil property $\Pi$, it is of interest to study the opposite operation: add edges to $G$ in order to get an edge-supergraph of $G$ fulfilling $\Pi$.
This will be called a \emph{$\Pi$-completion} of $G$.
It is straightforward to observe that enumerating all minimal deletions of $G$ that fulfil property $\Pi$ is equivalent to enumerating all minimal completions of $\overline{G}$ that fulfil property $\Pi$.
So, when $\Pi$  is closed under complement, we might choose to study one problem or the other, depending on which seems easier to visualise.

Notice that some properties $\Pi$ considered in this article have
already been studied in the light of maximal induced subgraphs or maximal edge-subgraph: an algorithm to find a single maximal split edge-subgraph of the input graph has been designed in \cite{heggernes}; and algorithms to find a single maximal edge-subgraph that is a cograph have been given in \cite{crespelle,lokshtanov}.
As for enumeration, Cao \cite{cao2020} recently proved that the above-mentioned restricted problem admits a polynomial number of solutions for several graph classes. From this result, he thus obtained polynomial-delay and polynomial space algorithms to enumerate all maximal induced subgraphs that fulfil property $\Pi$, 
for $\Pi$ referring to split graphs and threshold graphs (see Table \ref{tab:results}).

In Section \ref{sec:split}, we focus on the case where $\Pi$ describes split graphs: 
we provide an efficient algorithm to enumerate all minimal split completions 
of a given input graph $G$,
by establishing a bijection between them and the maximal stable sets of 
an auxiliary graph built from $G$.

In Section \ref{sec:cograph}, the desired property $\Pi$ refers to being a cograph.
At first sight, one may hope to enumerate all maximal induced subgraphs that are cographs by using the restricted problem; however, it turns out that the restricted problem in this case may have an exponential 
number of solutions, hence it cannot be used to derive a polynomial-delay algorithm.
Fortunately, we are able to adapt the \emph{Proximity Search} framework mentioned above, recently introduced 
by Conte \& Uno \cite{conte-uno}.
In order to make it work, we rely on the construction ordering with true and false twins.
The approach developed for maximal induced cographs can be extended to maximal threshold graphs.
%

Following the same outline as for cographs, we obtain in Section \ref{sec:threshold} a polynomial-delay algorithm to enumerate all minimal threshold deletions.

Finally in Section \ref{sec:extension} we consider the extension problem. 
The extension problem is a decision problem that, if solvable in polynomial time, can be derived 
into a simple and classical polynomial-delay and polynomial space algorithm called \emph{Binary partition} 
or \emph{Flashlight search} for the associated enumeration problem.

The maximal induced subgraph extension problem, studied in Subsection \ref{sec:ext-induced} is the following.
Given a graph $G$, a property $\Pi$ and two disjoint subsets of vertices $A$ and $B$ of $G$, 
does there exist a maximal subgraph of $G$ that fulfils property $\Pi$ and contains all the vertices of $A$ and none of $B$?
We prove that the extension problem for maximal induced subgraphs is NP-complete for every 
nontrivial hereditary property $\Pi$ (here, nontrivial means that $\Pi$ is true for infinitely many but not all graphs).
This proves that such a classical strategy cannot be applied to enumerate maximal induced subgraphs.

Lastly in Subsection \ref{sec:ext-edge}, we study the extension problem for maximal edge-subgraphs. We were not able to obtain such a general 
result as the above-mentioned one, however for basic classes that are $P_k$-free graphs and $C_k$-free graphs for $k$ at least $3$, the respective extension problems are both NP-complete.

\section{Notations and definitions}

For the remainder of this section, let $G$ be a graph. We denote by $V(G)$ the vertex set of $G$ and 
by $E(G)$ its edge set.
When $G=(V,E)$, and if there is no ambiguity, we may use $V$ and $E$ instead of $V(G)$ and $E(G)$.

The \emph{neighbourhood} of a vertex $v$ in $G$ is denoted $N_G(v)$ and is defined as $N_G(v)=\{w\in V(G) \ | \ vw \in E(G)\}$. The \emph{closed neighbourhood} of a vertex $v$ in 
$G$ is denoted by  
$N_G[v]$ and is equal to $N_G(v)\cup\{v\}$. When it is clear from 
the context, the subscript $G$ might be omitted.
Two vertices $x$ and $y$ are called \emph{twins} whenever 
$N(x)\setminus \{y\} = N(y)\setminus \{x\}$. Two twins are called 
\emph{true twins} if there exists an edge between them, or  \emph{false twins} otherwise. 

Given $X,Y$ two subsets of vertices, the \emph{symmetric difference between} $X$ and $Y$, denoted by $X\triangle Y$, is defined as $(X\cup Y)\setminus (X\cap Y)$.

If $X\subseteq V$, then $G[X]$ denotes the subgraph of $G$ induced by $X$, that is $(X, E(G)\cap X^2)$ (obtained from $G$ by removing vertices not in $X$ and edges incident to at least one deleted vertex). A graph $H$ is an \emph{induced subgraph} of $G$ if there exists $X\subseteq V$ such that $H=G[X]$. For $X\subseteq V$, we might sometimes write $G\setminus X$ instead of $G[V\setminus X]$.

A graph $H$ is a \emph{subgraph} of $G$ if 
$H$ can be obtained from $G$ by 
removing some vertices and some edges of $G$, that is to say if $V(H)\subseteq V(G)$ and $E(H)\subseteq E(G)$. If moreover $V(H)=V(G)$, then $H$ is called an \emph{edge-subgraph}, or else a \emph{deletion}, of $G$. By abuse of language, if $E'\subseteq E(G)$,  we might call $E'$ a deletion of $G$, referring to $(V(G), E')$. 
A graph $H$ is called an \emph{edge-supergraph} of $G$, or a \emph{completion} of $G$, if $V(H)=V(G)$ and $E(G)\subseteq E(H)$. Then $E(H)\setminus E(G)$ is called the set of \emph{fill edges}.

A graph property $\Pi$ is called \emph{hereditary} when it is closed under vertex removal.
In other words, if the property is fulfilled by the graph we consider, then it is fulfilled by all its induced subgraphs.
All the properties considered in this paper are hereditary.
A graph property is called \emph{monotone} when it is closed under both edge removal 
and vertex removal. That is to say, if the property is fulfilled by $G$,
then the property is fulfilled by all subgraphs of $G$.
A property $\Pi$ is called \emph{sandwich monotone} \cite{heggernes2} if, for any two graphs $G$ and $H$ on the same vertex set $V$, that both
fulfil the property $\Pi$, and such that $H$ is an edge-subgraph of $G$, there 
exists a sequence of edges $s=(e_1,e_2,\ldots,e_k)$ of $E(G)\setminus E(H)$
such that: for any $1\leq i \leq k$, the graph $(V, E(H) \cup s_i)$  
fulfils the property $\Pi$, where  $s_i=\{e_1,\ldots,e_i\}$.

A graph $H$ is a \emph{$\Pi$-edge-subgraph} of $G$, also called a \emph{$\Pi$-deletion} of $G$, if it is an edge-subgraph of $G$ and it fulfils property $\Pi$. It is called a \emph{maximal} $\Pi$-edge-subgraph, or a \emph{minimal} $\Pi$-deletion, if it is inclusion-wise maximal, that is to say there is no $\Pi$-edge-subgraph $H'$ of $G$ such that $E(H) \subsetneq E(H')\subseteq E(G)$.
Similarly, a graph $H$ is a \emph{$\Pi$-edge-supergraph} of $G$, also called a \emph{$\Pi$-completion} of $G$, if it is an edge-supergraph and $G$ and it fulfils property $\Pi$. It is called a \emph{minimal} $\Pi$-edge-supergraph of $G$, or \emph{minimal} $\Pi$-completion, if it is inclusion-wise minimal: there is no $\Pi$-supergraph $H'$ of $G$ such that $E(G) \subseteq E(H')\subsetneq E(H)$.
When $\Pi$ refers to being in some known class of graphs, say being a split graph, we will omit the hyphen and simply say: a split edge-subgraph, a split deletion, a minimal split deletion, and so on.

Finally, when dealing with a total ordering on a set, we consider the \emph{lexicographic order} on finite sequences of elements as follows.
Given two distinct finite sequences, consider the first (leftmost) element on which they differ.
The sequence in which this element is smaller (or does not even exist) is the smallest sequence.

All over this paper, \emph{maximal} (resp. \emph{minimal}) will always refer to inclusion-wise maximal (resp. minimal), and never to maximum (resp. minimum) cardinality.

\section{Split graphs: minimal split completions}
\label{sec:split}

\emph{Split graphs} \cite{foldes-hammer,golumbic-livre} are graphs whose vertex set
can be partitioned into two sets $K$ and $S$,  where $K$ is a clique of $G$ and $S$ is a stable set.
This partition $K+S$ is called a \emph{split partition} of the graph, and is not necessarily unique.
Split graphs can also be characterised as $(C_4,2K_2,C_5)$-free graphs.
This class of graphs is self-complementary, meaning that for any split graph $G$, its complement $\overline{G}$ is also split.

Split graphs have been studied in detail in  \cite{heggernes} in order to find one single minimal split completion of a given graph, and in \cite{cao2020} to enumerate all maximal split induced subgraphs.
Our aim here is to enumerate all minimal split completions of $G$ into a split graph, using a polynomial-delay polynomial-space algorithm.
To do so we establish a link between minimal split completions of a graph and its maximal stable sets. However we will 
show that different maximal stables sets may lead to the same completion. To overcome this problem we first perfom 
a pre-processing step that remove some unnecessary vertices. Once the process is completed, the bijection between 
the two objects is established.

Recall that a minimal split completion of a graph $G$ is an inclusion-wise minimal split supergraph of $G$ on the same vertex set; that is to say a split graph $H$ such that $V(H)=V(G)$, $E(G) \subseteq E(H)$, and there exists no split graph $H'=(V(G), E(H'))$ such that $E(G)\subseteq E(H') \subsetneq E(H)$. Every edge in $E(H)\setminus E(G)$ is called a \emph{fill edge}.

Since the class of split graphs is self-complementary, we can list all maximal split deletions of $G$ simply by computing all minimal split completions of the complement graph $\overline{G}$, and then complementing the solutions.

\paragraph{}

First, note that a split completion of $G$ can always be obtained from a stable set.
Indeed, if $S$ is a stable set of $G$, adding all possible edges between pairs of vertices of $V\setminus S$ makes $V\setminus S$ into a clique, and the resulting graph is a split completion $H$ of $G$.
We will say that $H$ is the split completion \emph{induced} by $S$.
In the sequel we will see how to characterise the stable sets which will produce different minimal split completions.

\begin{lem}\label{th1}
Let $G=(V,E)$ be a graph.
For any minimal split completion $H$ of $G$, there exists a split partition $V=K+S$ of $H$ such that $S$ is a maximal stable set of $G$.
\end{lem}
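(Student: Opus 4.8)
The plan is to start from an arbitrary minimal split completion $H$ of $G$, fix some split partition $V = K_0 + S_0$ of $H$, and show that we can massage $S_0$ into a \emph{maximal} stable set of $G$ without destroying any of the properties we need. First I would observe that $S_0$ is automatically a stable set of $G$: since $E(G) \subseteq E(H)$ and $S_0$ is independent in $H$, it is certainly independent in $G$. The issue is that $S_0$ need not be \emph{maximal} in $G$. So the core of the argument is: if $S_0$ is not maximal in $G$, we move vertices from $K_0$ into the independent side one at a time, using minimality of $H$ to show that each such move is forced to be ``free'' (it reveals that certain fill edges were unnecessary, contradicting minimality) — unless the vertex we move was already non-adjacent in $G$ to everything in $S_0$.

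The key steps, in order. \emph{Step 1:} Note $S_0$ is stable in $G$. \emph{Step 2:} Suppose for contradiction (or just: while possible) there is a vertex $v \in K_0$ with no neighbour in $S_0$ \emph{in the graph $G$}. Consider moving $v$ to the stable side: look at $K_0 \setminus \{v\}$ and $S_0 \cup \{v\}$. The set $S_0 \cup \{v\}$ is still stable in $H$? Not necessarily — $v$ may have $H$-neighbours in $S_0$, and those edges could be fill edges or original edges. Here is where minimality of $H$ enters: I want to argue that one can delete exactly the fill edges between $v$ and $S_0 \cup K_0$ as appropriate and still have a split graph, contradicting that $H$ is a \emph{minimal} completion. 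Concretely, since $v$ has no $G$-neighbour in $S_0$, every edge of $H$ between $v$ and $S_0$ is a fill edge; removing all of them yields $H' = (V, E(H) \setminus \{vw : w \in S_0, vw \in E(H)\})$, which still contains $E(G)$, and in which $K_0 \setminus \{v\}$ is a clique and $S_0 \cup \{v\}$ is stable — so $H'$ is a split completion with strictly fewer fill edges, contradicting minimality. Hence no such $v$ exists, i.e. every vertex of $K_0$ has a $G$-neighbour in $S_0$. \emph{Step 3:} Conclude that $S_0$ is a maximal stable set of $G$: any vertex outside $S_0$ lies in $K_0$, and by Step 2 it has a neighbour in $S_0$, so $S_0$ cannot be extended. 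Take $S := S_0$ and $K := K_0$; this is the desired split partition of $H$.

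I expect the main obstacle to be \emph{Step 2}: making precise the exchange argument and verifying that $H'$ really is still a split graph after removing the fill edges incident to $v$ — in particular checking that no new obstruction ($C_4$, $2K_2$ or $C_5$) is created. The cleanest way is not to check forbidden subgraphs directly but simply to exhibit the split partition $(K_0 \setminus \{v\}) + (S_0 \cup \{v\})$ of $H'$: a clique plus an independent set, which by the very definition of split graph certifies that $H'$ is split. Then the only thing to be careful about is that $E(G) \subseteq E(H')$, which holds precisely because we only deleted edges $vw$ with $w \in S_0$, and those are non-edges of $G$ by the choice of $v$. One should also double-check the degenerate cases (e.g.\ $S_0 = \emptyset$, or $v$ having no $H$-neighbour in $S_0$ at all, in which case $H' = H$ and one re-runs the argument after literally moving $v$ across the partition), but these are routine.
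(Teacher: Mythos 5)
Your argument is correct and takes essentially the same route as the paper's proof: both move clique-side vertices having no $G$-neighbour on the stable side across the partition, and both invoke minimality of $H$ to rule out the fill edges that such a vertex would otherwise send to the stable side. The only cosmetic difference is that you run the exchange iteratively and contradict minimality of $H$ directly, whereas the paper first normalises the partition so that $S$ is a maximal stable set of $H$ and then contradicts that maximality via the completion induced by $S\cup\{y\}$.
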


\begin{proof}
Let $H$ be a minimal split completion of $G$.
The graph $H$ is split so there exists a split partition, and $H=(S+K, E+F)$ (where $F$ is the set of fill edges). 
If $S$ is not a maximal stable set of $H$, there exists $x\in V$ such that $S\cup \{x\}$ is a stable set of $H$, and of course $K\setminus \{x\}$ is still a clique of $H$.
Consequently, suppose that $S$ is a maximal stable set of $H$.

For contradiction, suppose that $S$ is not a maximal stable set of $G$.
There exists $y$ such that $S\cup \{y\}$ is a stable set of $G$.
As $S\subseteq S\cup\{y\}$, the split completion induced by $S\cup\{y\}$ is a split sub-completion of $H$, supposed to be minimal.
Therefore $S\cup\{y\}$ is a stable set of $H$, that is, $S$ is not a maximal stable set of $H$: contradiction.
\end{proof}

Hence it is only necessary to consider (inclusion-wise) \emph{maximal} stable sets when looking for \emph{minimal} split completions.

In the following, we identify two kinds of vertices which can cause a solution to be non-minimal or produced twice.
They are true twins and \emph{redundant vertices}, the definition of which relies on nested neighbourhoods.

\begin{defi}[Redundant vertices]
Let $G=(V,E)$ be a graph.
A vertex $v\in V$ is \emph{redundant} if there exists $u\in V$ such that $N[u]\subsetneq N[v]$.
\end{defi}

Remark that, because of the strict inclusion between the closed neighbourhoods, for every redundant vertex $v$ there exists a vertex $u$ which is not redundant such that $N[u]\subsetneq N[v]$.
The set $N[u]$ is inclusion-wise minimal.
The interest of identifying redundant vertices is given by the following observation (a stronger version of this will be proved in Lemma \ref{lem: min completion iff}): if $S$ is a maximal stable set of $G$ not inducing a minimal split completion, then $S$ contains a redundant vertex.
However, the converse is not necessarily true, as shown on Figure \ref{fig: cliquestable}: this graph admits a maximal stable set $S=\{2, 5\}$ containing redundant vertices ($2$, $4$, and $5$ are redundant vertices), still inducing a minimal split completion.
Hence, we need to push further to identify which redundant vertices should be avoided in a stable set $S$ in order for the induced completion to be minimal.

\begin{figure}[!ht]
	\centering
	\includegraphics{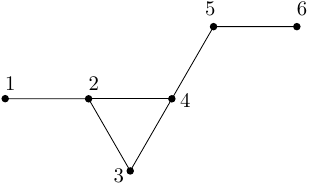}
	\caption{In this graph, the vertices $2$ and $5$ are redundant but the completion induced by the stable set $\{2, 5\}$ is minimal.}
	\label{fig: cliquestable}
\end{figure}

The right characterisation of minimal split completions is in fact the following:

\begin{lem}\label{lem: min completion iff}
Let $G$ be a graph, let $I$ be a maximal stable set of $G$.
The split completion of $G$ induced by $I$ is minimal if and only if $I$ does not contain any redundant vertex $x$ such that $V\setminus N(x)$ is a stable set of $G$.
\end{lem}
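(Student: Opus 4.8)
The plan is to fix convenient notation and prove both implications by contraposition, working directly (so without invoking sandwich monotonicity of split graphs). Write $K = V\setminus I$, and for a stable set $J$ of $G$ let $F_J$ denote the set of non-edges of $G$ whose two endpoints both lie outside $J$, so that the split completion induced by $J$ is $H_J := (V, E(G)\cup F_J)$ and in particular $H = H_I$. Since $F_J$ is always disjoint from $E(G)$, for stable sets $J, J'$ the inclusion $E(H_J)\subseteq E(H_{J'})$ is equivalent to $F_J\subseteq F_{J'}$, and one is strict exactly when the other is.

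For the direction ``if $I$ contains a redundant vertex $x$ with $V\setminus N(x)$ stable then $H$ is not minimal'', I would first observe that $I\subseteq V\setminus N(x)$ and the maximality of $I$ force $V\setminus N(x) = I$, that is $N(x) = K$ and $N[x] = K\cup\{x\}$. Choosing $u$ with $N[u]\subsetneq N[x]$, we get $u\in K$, the only neighbour of $u$ in $I$ is $x$, and $u$ has a non-neighbour $z$ in $K$. Then $I' := (I\setminus\{x\})\cup\{u\}$ is a stable set of $G$; the key checks are that $F_{I'}\subseteq F_I$ (the only potentially new fill edges run between $x$ and $K$, but $x$ is adjacent to all of $K$) while $uz\in F_I\setminus F_{I'}$. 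Hence $H_{I'}$ is a split completion of $G$ with $E(G)\subseteq E(H_{I'})\subsetneq E(H)$, so $H$ is not minimal.

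For the converse, suppose $H$ is not minimal, witnessed by a split completion $H'$ with $E(G)\subseteq E(H')\subsetneq E(H)$, and fix a split partition $V = K'\cup S'$ of $H'$. Then $S'$ is a stable set of $G$; extend it to a maximal stable set $J$. Since $S'\subseteq J$ makes $V\setminus J$ a subset of the clique $K'$ of $H'$, all of $F_J$ lies in $E(H')$, hence $H_J\subseteq H'\subsetneq H$ and $F_J\subsetneq F_I$. The inclusion $F_J\subseteq F_I$ is very rigid: any $a\in I\setminus J$ must be adjacent in $G$ to every other vertex of $V\setminus J$ (otherwise such a non-edge would lie in $F_J\setminus F_I$), so $N[a]\supseteq V\setminus J$; two distinct vertices of $I\setminus J$ would then be adjacent, contradicting that $I$ is stable, so $I\setminus J$ consists of a single vertex $x_0$ (it is nonempty because $F_J\ne F_I$ rules out $I = J$). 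From $N[x_0]\supseteq V\setminus J$ one reads that $V\setminus N(x_0) = \{x_0\}\cup(V\setminus N[x_0])\subseteq\{x_0\}\cup J$ is stable, so it only remains to show that $x_0$ is redundant. By strictness there is a non-edge $cd$ of $G$ with $c, d\in K$ and, say, $c\in J$; then $c\in J\setminus I$, so the neighbour(s) of $c$ in $I$ can only be $x_0$, and together with $N(c)\subseteq V\setminus J\subseteq N[x_0]$ this gives $N[c]\subseteq N[x_0]$. Moreover $d\in N[x_0]$ (if $d\notin J$ then $d\in V\setminus J\subseteq N[x_0]$; if $d\in J$ then $d\in J\setminus I$ likewise has its neighbour in $I$ equal to $x_0$), while $d\notin N[c]$ since $cd\notin E(G)$. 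Hence $N[c]\subsetneq N[x_0]$, so $x_0$ is redundant, and it is the vertex of $I$ we were after.

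I expect the routine part to be the bookkeeping with the sets $F_J$, and the genuinely substantive point to be in the converse: recognising that $F_J\subseteq F_I$ collapses $I\setminus J$ to a single vertex $x_0$, after which the strict inclusion $F_J\subsetneq F_I$ forces $x_0$ to be redundant almost immediately.
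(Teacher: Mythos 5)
Your proof is correct. The forward direction (a redundant $x\in I$ with $V\setminus N(x)$ stable forces non-minimality) is essentially the paper's argument in Lemma~\ref{th5}, recast via the stable set $(I\setminus\{x\})\cup\{u\}$; this phrasing has the small advantage of making explicit that the strictly smaller completion is itself split. The converse is where you genuinely depart from the paper: there, the paper (Lemma~\ref{th4}) first passes to a \emph{minimal} sub-completion, invokes Lemma~\ref{th1} to realise it as the completion induced by a maximal stable set $S$, and then uses the imported comparison of split partitions (Lemma~\ref{th3}) plus a cardinality count to get $\vert I\setminus S\vert=1$; you instead take an arbitrary proper split sub-completion, extend its stable side to a maximal stable set $J$ of $G$, and deduce $\vert I\setminus J\vert=1$ directly from the rigidity of the inclusion $F_J\subseteq F_I$ (any vertex of $I\setminus J$ must dominate $V\setminus J$, and two such vertices would be adjacent, contradicting stability of $I$). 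This makes the hard direction self-contained -- no appeal to Lemma~\ref{th3} nor to minimality of the witness -- at the cost of some bookkeeping with the fill-edge sets $F_J$, which you handle correctly. One step you should spell out: to conclude $c\in N[x_0]$ (and $d\in N[x_0]$ in the case $d\in J$) you are implicitly using that $I$ is a \emph{maximal} stable set, so every vertex outside $I$ has at least one neighbour in $I$, which by your rigidity claim must be $x_0$; as written, ``the neighbour(s) of $c$ in $I$ can only be $x_0$'' only bounds the neighbourhood and does not by itself yield $cx_0\in E(G)$. This is a one-line fix, not a gap in the argument.
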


In order to prove Lemma \ref{lem: min completion iff}, we need the following lemma which establishes a link between the split partitions of a graph and those of its subgraphs.
In particular it can be used to compare two different split partitions of a split graph.
This result will be invoked when characterising non-minimal split completions.

\begin{lem}[{\cite[Observation 4]{heggernes}} \& Corollary]\label{th3}
Let $G=(V,E)$ and $G'=(V,E')$ be two split graphs such that $E\subseteq E'$, and let $V=I+K$ and $V=I'+K'$ be split partitions of $G$ and $G'$ respectively.
Then the following two inequalities hold:
\begin{itemize}
	\item $\vert K'\cap K\vert \geq \vert K\vert -1$;
	\item $\vert I'\cup I\vert \leq \vert I\vert +1$.
\end{itemize}
\end{lem}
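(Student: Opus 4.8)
The plan is to prove Lemma \ref{th3} by first establishing the first inequality via the cited Observation 4 of \cite{heggernes}, and then deriving the second inequality from the first by a counting argument. Since the statement is labelled as that Observation "\& Corollary", I expect the first bullet to be exactly (or nearly exactly) the content of \cite[Observation 4]{heggernes}, so the real work here is the "Corollary" part, namely deducing $\lvert I' \cup I\rvert \le \lvert I\rvert + 1$.

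First I would recall why $\lvert K' \cap K\rvert \ge \lvert K\rvert - 1$. The idea is that $K$ is a clique in $G$, hence also a clique in $G'$ since $E \subseteq E'$. In the split partition $V = I' + K'$ of $G'$, a clique can contain at most one vertex of the independent set $I'$; therefore at most one vertex of $K$ lies in $I'$, and all the others lie in $K'$. This gives $\lvert K \setminus K'\rvert \le 1$, i.e. $\lvert K \cap K'\rvert \ge \lvert K\rvert - 1$. (If instead one only wants to quote \cite{heggernes}, one simply cites it here.)

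Next, for the second inequality, I would translate the first one using complementation within $V$. Write $n = \lvert V\rvert$. From $V = I + K$ and $V = I' + K'$ as disjoint unions, we have $\lvert I\rvert = n - \lvert K\rvert$ and similarly for the primed sets, and $I = V \setminus K$, $I' = V \setminus K'$. Then
\[
I' \cup I = (V \setminus K') \cup (V \setminus K) = V \setminus (K' \cap K),
\]
so $\lvert I' \cup I\rvert = n - \lvert K' \cap K\rvert \le n - (\lvert K\rvert - 1) = (n - \lvert K\rvert) + 1 = \lvert I\rvert + 1$, which is exactly the desired bound. This is a short and routine computation once the first inequality is in hand.

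The main obstacle, such as it is, is mostly a matter of citation hygiene rather than mathematics: one must make sure that the version of "Observation 4" being quoted is stated for the right direction of edge-inclusion (clique of the smaller graph sits almost entirely inside the clique side of the larger graph's partition) and phrase the clique-meets-independent-set argument cleanly. A minor subtlety worth a sentence is that the argument "a clique meets an independent set in at most one vertex" is what forces the "$-1$" and the "$+1$"; it should be stated explicitly since it is the only combinatorial input. I would also note in passing that by symmetry of the roles (applying the same reasoning to an independent set of $G'$ inside the partition of $G$) one could get companion bounds, but they are not needed for the statement as written, so I would not include them.
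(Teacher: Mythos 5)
Your proof is correct. The paper gives no proof of this lemma at all: it simply cites Observation~4 of \cite{heggernes} for the first inequality and treats the second as an immediate corollary, which is exactly the structure you follow. Your two steps are both sound: since $K$ remains a clique in $G'$ (as $E\subseteq E'$) it meets the independent set $I'$ in at most one vertex, giving $\vert K\cap K'\vert\geq\vert K\vert-1$; and since the split partitions are partitions of $V$, the identity $I\cup I' = V\setminus(K\cap K')$ turns that bound into $\vert I\cup I'\vert\leq\vert I\vert+1$ by a direct count.
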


\begin{lem}\label{th4}
Let $G=(V,E)$ be a graph.
Let $I$ be a maximal stable set of $G$.
If the split completion of $G$ induced by $I$ is not minimal, there exists a redundant vertex $w\in I$ such that $N(w)=V\setminus I$.
\end{lem}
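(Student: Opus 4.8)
The plan is to witness the non-minimality by an explicit proper split sub-completion and then let the maximality of $I$ do the real work. Write $K:=V\setminus I$ and let $H$ be the split completion of $G$ induced by $I$, so that $E(H)=E(G)\cup\binom{K}{2}$. Since $H$ is not minimal, I would fix a split completion $H'$ of $G$ with $E(G)\subseteq E(H')\subsetneq E(H)$ and pick a split partition $V=I'+K'$ of $H'$. The preliminary observation I would isolate first is that every edge of $H'$ is either an edge of $G$ or a fill edge of $H$, and that all fill edges of $H$ lie inside $K$; hence the edges of $H'$ incident to a vertex of $I$ are exactly those of $G$, so $I$ remains a stable set of $H'$, and the clique $K'$ contains at most one vertex of $I$ (this also follows from Lemma \ref{th3} applied to $H'\subseteq H$, which gives $|K'\setminus K|=|K'\cap I|\le 1$).

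Next I would split into two cases according to $|K'\cap I|$. If $K'\cap I=\emptyset$ then $I\subseteq I'=V\setminus K'$; but $I'$ is a stable set of $H'$, hence of $G$, so by maximality of $I$ we get $I=I'$ and therefore $H=H'$, contradicting $H'\subsetneq H$. So $K'\cap I=\{z\}$ for a single vertex $z\in I$; set $K'':=K'\cap K$, so that $K'=K''\cup\{z\}$ and $I'=(K\setminus K'')\cup(I\setminus\{z\})$. I claim $w:=z$ is the vertex sought by the lemma.

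To prove $N_G(z)=V\setminus I$, note $z\in I$ is non-adjacent in $G$ to $I\setminus\{z\}$, so it suffices to show $z$ is adjacent in $G$ to every vertex of $K$. Vertices of $K''$ are neighbours of $z$ in the clique $K'$ of $H'$, and $z$--$K$ edges are never fill edges, so they are edges of $G$. For $v\in K\setminus K''$, note $v\in I'$, so $v$ has no $G$-neighbour in $I'\setminus\{v\}\supseteq I\setminus\{z\}$; since $I$ is a maximal stable set of $G$ and $v\notin I$, the set $I\cup\{v\}$ is not stable, so $v$ must have a $G$-neighbour in $I$, necessarily $z$. Hence $N_G(z)=K=V\setminus I$.

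Finally, for redundancy of $z$: as $H'\subsetneq H$, some fill edge $vv'$ of $H$ is missing from $H'$, with $v,v'\in K$ and $vv'\notin E(G)$; since $K'$ is a clique of $H'$ we cannot have $\{v,v'\}\subseteq K''$, so say $v\in K\setminus K''$. As above $v\in I'$ has no $G$-neighbour in $I\setminus\{z\}$, so $N_G(v)\cap I\subseteq\{z\}$ and thus $N_G[v]\subseteq K\cup\{z\}=N_G[z]$; on the other hand $v'\in K=N_G(z)$ while $v'\notin N_G[v]$ because $vv'\notin E(G)$ and $v'\ne v$. Therefore $N_G[v]\subsetneq N_G[z]=N_G[w]$, i.e.\ $w$ is redundant, which finishes the argument. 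The only genuinely delicate point is the opening observation that fill edges live only inside $K$: this is what lets us transfer stability of $I$ and the neighbourhood of $z$ back and forth between $G$, $H'$ and $H$, after which the maximality of $I$ makes everything essentially forced.
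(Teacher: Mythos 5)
Your proof is correct and follows essentially the same route as the paper's: identify the unique vertex $z$ of $I$ that lies on the clique side of the smaller completion, use the fact that fill edges are confined to $K$ together with the maximality of $I$ to get $N_G(z)=V\setminus I$, and use a missing fill edge to exhibit a vertex $v$ with $N_G[v]\subsetneq N_G[z]$. The only cosmetic difference is that you bypass Lemmas \ref{th1} and \ref{th3} by taking an arbitrary proper split sub-completion $H'$ and observing directly that $I$ stays stable in $H'$ (so $|K'\cap I|\le 1$), whereas the paper routes the same counting through those two lemmas.
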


\begin{proof}
Let $I$ be a maximal stable set of $G$, let $H=(I+C, E+F)$ be the split completion of $G$ induced by $I$.
Assume that the completion $H$ is not minimal.
Then there exists a nonempty set of edges $A\subseteq F$ such that $H-A$ is a minimal split completion of $G$.
By Lemma \ref{th1} there exists a maximal stable set $S$ of $G$ such that $H-A = (S+K,E+F-A)$ is the split completion of $G$ induced by $S$.

We have the inclusions $G\subseteq H-A\subseteq H$.
By Lemma \ref{th3}, $\vert S\vert \leq \vert S\cup I\vert \leq \vert S\vert +1$.
Because $S$ and $I$ are both maximal stable sets of $G$ we cannot have $I\subseteq S$, so necessarily $\vert S\cup I\vert = \vert S\vert +1$.
Then, $$\vert S\vert +1 = \vert S\cup I\vert = \vert S\vert +\vert I\setminus S\vert.$$
Therefore $\vert I\setminus S\vert =1$, and we also have $\vert S\setminus I\vert >0$ because $S$ is maximal.
Consequently there exist $w\in V$ and $U\subseteq V$ such that $I = T\cup\{w\}$ and $S=T\cup U$, where $U$ and $T=S\cap I$ are disjoint stable sets of $G$.
Figure \ref{redondant} illustrates this situation.

\begin{figure}[!ht]
	\centering
	\includegraphics[scale=0.8]{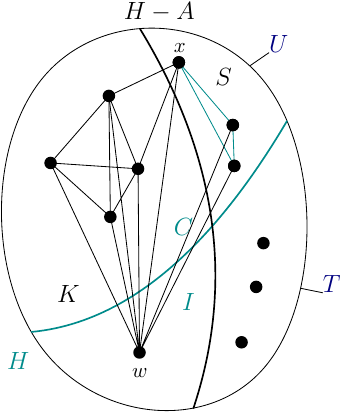}
	\caption{A partition of $V$ for the split completion $H$ and another split partition for $H-A$. The vertex $w$ is universal to $C$. Edges inside $S\cap C$ belong to $H$ but not to $H-A$.}
	\label{redondant}
\end{figure}

All fill edges of $H-A$ are also fill edges of $H$.
Because $w$ is in $I$, there is no fill edge incident to $w$ in $H$, so there is also no fill edge incident to $w$ in $H-A$.
As $w\in K$ in $H-A$, this implies $K\cap C\subseteq N(w)\subseteq (K\cap C)\cup U = C$.

Let $x\in U$.
As $I$ is a maximal stable set of $G$, the vertex $x$ has a neighbour in $I$.
This neighbour is not in $S$ because $x\in S$.
Therefore, it is in $I\setminus S=\{w\}$, that is, $xw\in E$.
Besides, $N[x]\subseteq K\cap C\subseteq N[w]$.
This being true for all $x\in U$, we have $N(w)=C=V\setminus I$.

Moreover, the completion $H$ is not minimal, so there exists $y\in U$ incident to a fill edge of $H$.
Hence there exists $z\in C$ such that $yz\notin E$.
Then $z\in N[w]\setminus N[y]$, so $N[y]\subsetneq N[w]$ and $w$ is redundant.
\end{proof}

Lemma \ref{th4} admits a converse, which is the following.

\begin{lem}\label{th5}
Let $G=(V,E)$ be a graph.
Suppose that there exist a redundant vertex $x$ such that $V\setminus N(x)$ is a stable set of $G$.
If $I$ is a maximal stable set of $G$ containing $x$, then the split completion induced by $I$ is not minimal.
\end{lem}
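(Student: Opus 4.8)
The plan is to exhibit, directly from the hypotheses, an explicit split completion $H'$ of $G$ whose edge set is strictly contained in that of the completion $H$ induced by $I$; this immediately witnesses that $H$ is not minimal.

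First I would carry out a normalization step. Since $x$ is redundant, pick $u \in V$ with $N[u] \subsetneq N[x]$. As $I$ is a maximal stable set containing $x$ and $V \setminus N(x)$ is, by hypothesis, a stable set that contains $I$ (every vertex of $I$ is a non-neighbour of $x$), maximality of $I$ forces $I = V \setminus N(x)$; hence $C := V \setminus I = N(x)$ and $N_G[x] = \{x\} \cup C$. From $u \in N[u] \subseteq N[x]$ and $u \neq x$ (else $N[x] \subsetneq N[x]$) we get $u \in N(x) = C$; moreover $x \in N_G(u)$, and $N_G(u) \subseteq N_G[x] \setminus \{u\} = (\{x\} \cup C) \setminus \{u\}$, which is disjoint from $V \setminus N[x] = I \setminus \{x\}$.

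Then I would let $H'$ be $H$ with all fill edges incident to $u$ deleted; equivalently, $H'$ is obtained from $G$ by turning only $C \setminus \{u\}$ into a clique. The crux is to check that $K := (C \setminus \{u\}) \cup \{x\}$ and $S := (I \setminus \{x\}) \cup \{u\}$ form a split partition of $H'$. Indeed $C \setminus \{u\}$ is a clique of $H'$ and $x$ is $G$-adjacent to every vertex of $C$, so $K$ is a clique of $H'$; and $I \setminus \{x\}$ is stable in $G$ and carries no fill edge of $H'$, while $N_{H'}(u) = N_G(u)$ is disjoint from $I \setminus \{x\}$ by the previous paragraph, so $S$ is stable in $H'$. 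A routine check gives $V = K + S$, so $H'$ is split. Finally, since $N[u] \subsetneq N[x]$ there is $z \in N_G[x] \setminus N_G[u]$; necessarily $z \in C$, $z \neq u$ and $uz \notin E$, so $uz$ is a fill edge of $H$ that is absent from $H'$. Thus $G \subseteq H' \subsetneq H$, so the completion $H$ induced by $I$ is not minimal.

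The main obstacle is the verification that $S$ is genuinely stable in $H'$, i.e.\ that moving $u$ onto the independent side creates no edge: this is exactly the point at which the nested-neighbourhood hypothesis $N[u]\subsetneq N[x]$, combined with the identity $I = V\setminus N(x)$, is essential, since it is what confines $N_G(u)$ to the clique side $K=(C\setminus\{u\})\cup\{x\}$. A secondary point to be careful about is confirming that the removed edge $uz$ is a bona fide fill edge of $H$ (both endpoints in $C$, non-adjacent in $G$), which follows from $z\in C\setminus\{u\}$ and $uz\notin E$.
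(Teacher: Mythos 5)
Your proof is correct and follows essentially the same route as the paper: since $I$ must equal $V\setminus N(x)$, you pick a vertex $u$ with $N[u]\subsetneq N[x]$ (the paper's $y$) and delete all fill edges incident to it, exhibiting a strictly smaller split completion. The only difference is that you spell out the split partition $K=(C\setminus\{u\})\cup\{x\}$, $S=(I\setminus\{x\})\cup\{u\}$ explicitly, a verification the paper leaves implicit.
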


\begin{proof}
Let $I$ be a maximal stable set of $G$ containing $x$. 
Necessarily $I=V\setminus N(x)$ because $V\setminus N(x)$ is a stable set.
Let $H:=(I+V\setminus I,E+F)$ the split completion induced by $I$.
Since $x$ is redundant, there exists $y\in V$ such that $N[y]\subsetneq N[x]$.
The vertex $y$ is a neighbour of $x$ so $y\in V\setminus I$.
Therefore $y$ is adjacent to every vertex of $V\setminus I$ in $H$.
But $N[y]\subsetneq N[x]$, so there exists $z\in V\setminus I$ such that $yz\notin E$.
In this case, $H-\{yz\in F \mid z\in N(x)\}$ is a proper sub-split completion of $H$.
Hence $H$ is not minimal.
\end{proof}

We can now get the exact characterisation announced in Lemma \ref{lem: min completion iff}.

\begin{proof}[Proof of Lemma \ref{lem: min completion iff}]
Lemma \ref{th4} gives one direction of implication, and Lemma \ref{th5} gives the other (by noticing that $N(w)=V\setminus I$ rewrites to $I=V\setminus N(w)$).
\end{proof}

In the light of Lemma \ref{lem: min completion iff}, the idea is now to remove from $G$ the ``bad'' redundant vertices identified in the lemma,  and then to only enumerate split partitions induced by maximal stable sets of this auxiliary graph. But in order to do so, we need to ensure that maximal stable sets in this smaller graph are also maximal in $G$.

\begin{lem}\label{lem: max stable set when removing redundant vertices}
Let $G=(V,E)$ be a graph and let $R$ be its set of redundant vertices.
For all $R'\subseteq R$, the maximal stable sets of $G\setminus R'$ are maximal stable sets of $G$.
\end{lem}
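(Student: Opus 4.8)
The plan is to fix $R'\subseteq R$ and an arbitrary maximal stable set $S$ of $G\setminus R'$, and to prove directly that $S$ is a maximal stable set of $G$. That $S$ is a stable set of $G$ is immediate: $G\setminus R'$ has exactly the edges of $G$ among the vertices of $V\setminus R'$, so a stable set there is a stable set in $G$. Hence the entire content is maximality, i.e. showing that every $v\in V\setminus S$ has a neighbour in $S$.

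First I would dispose of the easy case $v\in V\setminus R'$: then $v$ is a vertex of $G\setminus R'$, and since $S$ is a maximal stable set of $G\setminus R'$, $v$ already has a neighbour in $S$. So the only case that requires work is $v\in R'$, i.e. $v$ is a redundant vertex that was deleted. Here I would invoke the remark following the definition of redundant vertices: there exists a \emph{non-redundant} vertex $u$ with $N[u]\subsetneq N[v]$ (this follows by iterating the definition of redundancy, which terminates because closed neighbourhoods are finite and strictly decrease along the chain). Since $u$ is non-redundant, $u\notin R$, and as $R'\subseteq R$ we get $u\notin R'$, so $u$ is a vertex of $G\setminus R'$. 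Moreover $u\in N[u]\subsetneq N[v]$ forces $u\ne v$, hence $uv\in E$.

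Now I would split on whether $u\in S$. If $u\in S$, then $v$ has the neighbour $u$ in $S$, and we are done. If $u\notin S$, then by maximality of $S$ in $G\setminus R'$ the vertex $u$ has a neighbour $w\in S$; thus $w\in N(u)\subseteq N[u]\subsetneq N[v]$. Since $w\in S\subseteq V\setminus R'$ while $v\in R'$, we have $w\ne v$, so $w\in N(v)$. In either subcase $v$ has a neighbour in $S$, so $S\cup\{v\}$ is not stable. Combining the two cases, no vertex of $V\setminus S$ can be added to $S$, which proves that $S$ is maximal in $G$.

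I do not expect a genuine obstacle in this argument; it is essentially bookkeeping with the strict inclusion $N[u]\subsetneq N[v]$. The one subtle point — and the reason the argument works — is that the witness of redundancy can be taken \emph{outside} $R'$: a direct witness $u_0$ with $N[u_0]\subsetneq N[v]$ might itself lie in $R'$, which is why one must pass to a non-redundant witness via the remark. Everything else follows routinely.
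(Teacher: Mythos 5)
Your proof is correct and is essentially the paper's argument: the key step in both is to replace the deleted redundant vertex $v$ by a non-redundant witness $u$ with $N[u]\subsetneq N[v]$ and then exploit maximality of $S$ in $G\setminus R'$. You merely phrase it directly (exhibiting a neighbour of $v$ in $S$, via $u$ or a neighbour $w$ of $u$) where the paper argues by contradiction that $S\cup\{u\}$ would be a larger stable set of $G\setminus R'$; the content is the same.
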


\begin{proof}
Let $R'\subseteq R$. Recall that $R=\{v\in V\mid \exists u\in V,~N[u]\subsetneq N[v]\}$.
All maximal stable sets of $G\setminus R'$ are stable sets of $G$.
We will see that they are maximal.

Let $S$ be a maximal stable set of $G\setminus R'$.
For contradiction, assume that $S$ is not a maximal stable set of $G$.
Hence there exists $v\in V$ such that $S\cup\{v\}$ is a stable set of $G$, and $v\in R'$ because otherwise $S$ would not be maximal in $G\setminus R'$.
The vertex $v$ is redundant so there exists $u\in V\setminus R$ such that $N[u]\subsetneq N[v]$.
This way, $S\cup\{u\}$ is a stable set of $G$ and a stable set of $G\setminus R'$: contradiction.
\end{proof}

Nevertheless, we are not yet finished. Given a graph $G$, we know that removing some identified redundant vertices from $G$ provides a graph in which every maximal stable set gives a minimal split completion of $G$, and doing so we guarantee to actually get every minimal spit completion of $G$.
However, nothing prevents us from finding the same solution several times.
This phenomenon is illustrated in Figure \ref{meme}, where the stable sets are the sets of circled vertices, and the completion consists only in adding the middle horizontal edge.
To solve this issue, we will characterise stable sets which give the same minimal completion, among the ones satisfying conditions of Lemma \ref{lem: min completion iff}.
This time, the vertices at which we are looking are \emph{true twins}.
This leads us to the following: two distinct maximal stable sets induce the same split completion if and only if their symmetric difference is a pair of true twins. More precisely:

\begin{figure}[!ht]
	\centering
	\includegraphics[page=1]{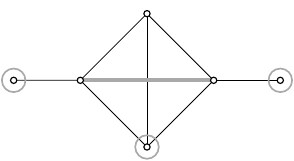}
	\hspace{1cm}
	\includegraphics[page=2]{meme.pdf}
	\caption{The two stable sets represented by circled vertices give the same minimal split completion.}
	\label{meme}
\end{figure}

\begin{lem}\label{lem: kill twins to avoid same split completion}
Let $G$ be a graph, let $S_1$ and $S_2$ be two distinct maximal stable sets of $G$.
Denote by $H_1$ and $H_2$ the split completions of $G$ induced by $S_1$ and $S_2$ respectively.
Define $S:=S_1\cap S_2$.
Then $S_1$ and $S_2$ induce the same split completion (that is to say, $E(H_1)=E(H_2)$) if and only if there exists a pair of true twins $x_1,~x_2\in V$ such that $S_1=S\cup\{x_1\}$ and $S_2=S\cup\{x_2\}$, with $N_G[x_1]=N_G[x_2]=V\setminus S$.
\end{lem}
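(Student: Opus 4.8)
The plan is to prove both directions of the equivalence by carefully analysing when two distinct maximal stable sets can produce the same edge set in the induced split completion.

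\medskip

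\noindent\textbf{Setup and the easy direction.} First I would establish the ``if'' direction, which should be straightforward. Suppose $x_1,x_2$ are true twins with $N_G[x_1]=N_G[x_2]=V\setminus S$, and $S_1=S\cup\{x_1\}$, $S_2=S\cup\{x_2\}$. Then $V\setminus S_1 = (V\setminus S)\setminus\{x_1\} = N_G(x_1)$ and similarly $V\setminus S_2 = N_G(x_2)$. The completion $H_i$ is obtained from $G$ by turning $V\setminus S_i$ into a clique. I would check that $E(H_1)$ and $E(H_2)$ coincide: the two cliques $V\setminus S_1$ and $V\setminus S_2$ differ only in that one contains $x_2$ and the other $x_1$; since $x_1$ and $x_2$ are true twins, $x_2$ is already adjacent in $G$ to all of $N_G(x_1)\setminus\{x_2\}$ (and to $x_1$), so adding $x_2$ to the clique $V\setminus S_1$ adds no new edge beyond what is already in $G$, and symmetrically. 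Hence $E(H_1)=E(H_2)$.

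\medskip

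\noindent\textbf{The hard direction.} Now suppose $E(H_1)=E(H_2)$ with $S_1\neq S_2$. Both $S_1$ and $S_2$ are maximal stable sets of $G$, so neither contains the other, giving $S_1\setminus S_2\neq\emptyset$ and $S_2\setminus S_1\neq\emptyset$. Here I would try to invoke Lemma~\ref{th3} (the Heggernes observation): both $H_1$ and $H_2$ are split graphs with the \emph{same} edge set, so $H_1$ is trivially an edge-subgraph of $H_2$ and vice versa, and $S_1+（V\setminus S_1)$, $S_2+(V\setminus S_2)$ are split partitions of the same graph $H:=H_1=H_2$. Applying Lemma~\ref{th3} in both directions yields $|S_1\cup S_2|\leq |S_1|+1$ and $|S_1\cup S_2|\leq |S_2|+1$. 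Combined with $S_1\not\subseteq S_2$ and $S_2\not\subseteq S_1$, this forces $|S_1\setminus S_2|=|S_2\setminus S_1|=1$, say $S_1=S\cup\{x_1\}$, $S_2=S\cup\{x_2\}$ with $x_1\neq x_2$ and $S=S_1\cap S_2$. So far this is a clean structural reduction; the remaining work is to show $x_1,x_2$ are true twins with the stated neighbourhood.

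\medskip

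\noindent\textbf{Identifying the twins.} With $V\setminus S_1 = (V\setminus S)\setminus\{x_1\}$ being a clique of $H$ and $V\setminus S_2 = (V\setminus S)\setminus\{x_2\}$ also a clique of $H$, I would argue as follows. Since $x_1\in S_1$ it is incident to no fill edge of $H_1$, so $N_H(x_1)=N_G(x_1)$; likewise $N_H(x_2)=N_G(x_2)$. Now $x_1\notin S_2$, so $x_1\in V\setminus S_2$, which is a clique of $H$; hence $x_1$ is $H$-adjacent to every vertex of $(V\setminus S)\setminus\{x_1,x_2\}$, and also (since $E(H_1)=E(H_2)$ and in $H_2$ the vertex $x_1$ lies in the clique together with $x_2$) $x_1x_2\in E(H)$. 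Therefore $V\setminus S\subseteq N_H[x_1]=N_G[x_1]$. Conversely $x_1\in S_1$ which is stable, so $x_1$ has no $G$-neighbour in $S$, giving $N_G[x_1]\subseteq V\setminus S$ together with $x_1$ itself, i.e. $N_G[x_1]=(V\setminus S)\cup\{x_1\}=(V\setminus S_1)\cup\{x_1,x_2\}$; since $x_1x_2\in E(G)$ (it is in $E(H)$ and incident to $x_1\in S_1$, hence not a fill edge), we get $N_G[x_1]=V\setminus S$. By the symmetric argument $N_G[x_2]=V\setminus S$. Finally $N_G[x_1]=N_G[x_2]$ immediately gives that $x_1$ and $x_2$ are twins, and they are true twins because $x_1x_2\in E(G)$. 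This completes the proof.

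\medskip

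\noindent The main obstacle I anticipate is the bookkeeping in the last paragraph: one must be careful to distinguish $G$-edges from fill edges when transferring adjacency information between $H_1$, $H_2$ and $G$, and in particular to verify that $x_1x_2\in E(G)$ rather than merely $x_1x_2\in E(H)$. The key leverage throughout is that a vertex placed in the stable side of its own split partition has identical neighbourhood in $G$ and in the completion, so all its adjacencies in the common graph $H$ are genuine $G$-edges.
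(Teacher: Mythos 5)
Your proof follows essentially the paper's route: you apply Lemma~\ref{th3} to the common graph $H:=H_1=H_2$ to force $|S_1\setminus S_2|=|S_2\setminus S_1|=1$, and then you transfer adjacencies between $H$ and $G$ using the fact that no fill edge is incident to a vertex lying on its own stable side. The ``if'' direction and the cardinality reduction are correct. However, there is a genuine flaw in the ``only if'' direction at the point where you claim $x_1x_2\in E(H)$: your justification, ``in $H_2$ the vertex $x_1$ lies in the clique together with $x_2$'', is false. The clique side of $H_2$ is $V\setminus S_2=(V\setminus S)\setminus\{x_2\}$, which does \emph{not} contain $x_2$ (and symmetrically the clique of $H_1$ does not contain $x_1$), so neither completion places $x_1$ and $x_2$ on the same clique side. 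Clique membership alone only yields $N_G[x_1]\supseteq (V\setminus S)\setminus\{x_2\}$, which is not enough to conclude that $x_1$ and $x_2$ are adjacent, hence not enough to get true twins or $N_G[x_1]=V\setminus S$.

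The missing ingredient is the maximality of the stable sets in $G$, which you never invoke at this step and which the paper uses precisely here: if $x_1x_2\notin E(G)$, then $S\cup\{x_1,x_2\}=S_1\cup S_2$ would be a stable set of $G$ strictly containing the maximal stable set $S_1$, a contradiction; hence $x_1x_2\in E(G)\subseteq E(H)$. That maximality is indispensable can be seen from $G$ consisting of two nonadjacent vertices $x_1,x_2$ with $S=\varnothing$, $S_1=\{x_1\}$, $S_2=\{x_2\}$: there $H_1=H_2=G$ yet $x_1x_2\notin E(G)$, which is possible only because these stable sets are not maximal — so no argument ignoring maximality, such as your clique-side reasoning, can establish the adjacency. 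Once $x_1x_2\in E(G)$ is obtained this way, the rest of your bookkeeping (in particular $N_H(x_1)=N_G(x_1)$ and $N_H(x_2)=N_G(x_2)$, giving $N_G[x_1]=N_G[x_2]=V\setminus S$) goes through and coincides with the paper's argument.
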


\begin{proof}
Let $K_1:=V\setminus S_1$ and $K_2:=V\setminus S_2$ be cliques in $H_1$ and $H_2$, respectively.

Assume $S_1$ and $S_2$ induce the same split completion, that is to say $E(H_1)=E(H_2)$.
By Lemma \ref{th3}, we have $\vert S_1\cup S_2\vert = \vert S_1\vert +1$ and $\vert S_1\cup S_2\vert = \vert S_2\vert +1$.
Then $\vert S_1 =\vert S_2$ and there exists $x_1\in S_1\setminus S_2$ and $x_2\in S_2\setminus S_1$ such that $S_1=S\cup \{x_1\}$ and $S_2=S\cup \{x_2\}$.

There is no fill edge incident to $x_1$ in $H_1$.
But because $x_1\in K_2$, we have $K_1\cap K_2 \subseteq N_G(x_1) \subseteq (K_1\cap K_2)\cup\{x_2\}$.
Moreover $x_1 x_2\in E(G)$, otherwise $S_1\cup S_2$ would be a stable set of $G$ containing $S_1$, assumed to be maximal.
This implies that $N[x_1]=V\setminus S$.
By symmetry of the roles of $x_1$ and $x_2$, it gives $N[x_1]=N[x_2]=V\setminus S$, and in particular $x_1$ and $x_2$ are true twins.

Conversely, as $N[x_1]=N[x_2]=V\setminus S$, there is no fill edge incident to $x_1$ in $H_2$, and there is no fill edge incident to $x_2$ in $H_1$.
Consequently, the only fill edges in $H_2$ and in $H_1$ are between vertices of $V\setminus S = K_1\cap K_2$.
Since $H_1$ and $H_2$ are both split, they have the same set of fill edges, so $E(H_1)=E(H_2)$.
\end{proof}

Inspired by Lemmas \ref{lem: min completion iff} and \ref{lem: kill twins to avoid same split completion}, we can then construct from every graph $G$ an auxiliary graph $f(G)$. Let $B=\{x\in V(G) \ | \ x \text{ is redundant and } V\setminus N(x)  \text{ is a stable set} \}$.
First remove $B$ from $G$, and observe that $G[V\setminus B]$ had no extra pair of true twins compared to $G$ itself.
Then from $G[V\setminus B]$, remove all but one vertex from each set of pairwise true twins, resulting in the graph $f(G)$ having no pair of true twins.
The auxiliary graph $f(G)$ is not unique because of the choice between pair of true twins, but is does not matter for the following.

\begin{thm}\label{thm: bijection split}
Let $G$ be a graph.
There exists a bijection between the set of all minimal split completions of $G$ and the set of all maximal stable sets of any auxiliary graph $f(G)$.
\end{thm}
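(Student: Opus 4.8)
The plan is to exhibit the bijection explicitly. Given an auxiliary graph $f(G)$, the map $\Phi$ sends a maximal stable set $S$ of $f(G)$ to the split completion of $G$ induced by the stable set $S$ viewed inside $G$. Before anything else, I would record that $S$ is indeed a stable set of $G$, and that by Lemma \ref{lem: max stable set when removing redundant vertices} (applied first to $R'=B$, then using that removing true twins one by one also preserves maximality of stable sets, since a true twin left out can always be swapped for the retained one) $S$ is in fact a \emph{maximal} stable set of $G$. Moreover $S$ contains no vertex of $B$, so by Lemma \ref{lem: min completion iff} the split completion it induces is minimal. Hence $\Phi$ is well defined into the set of minimal split completions of $G$.

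Next I would prove surjectivity. Let $H$ be a minimal split completion of $G$. By Lemma \ref{th1} there is a split partition $V = K+S$ of $H$ with $S$ a maximal stable set of $G$, and one checks that $H$ is exactly the completion induced by $S$ (all non-edges of $G$ inside $K$ must be filled, and there are no other fill edges by minimality — this is the standard argument, essentially reproved inside Lemma \ref{th5}). By Lemma \ref{lem: min completion iff}, since $H$ is minimal, $S$ contains no redundant vertex $x$ with $V\setminus N(x)$ stable, i.e.\ $S\cap B=\varnothing$. So $S$ is a maximal stable set of $G\setminus B = G[V\setminus B]$. Now $f(G)$ is obtained from $G[V\setminus B]$ by deleting all but one representative from each true-twin class; replacing each vertex of $S$ by the chosen representative of its twin class yields a stable set $S'$ of $f(G)$ with the same closed-neighbourhood structure, which is maximal in $f(G)$ (again by the twin-swap argument), and by Lemma \ref{lem: kill twins to avoid same split completion} $S'$ and $S$ induce the same split completion of $G$. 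Thus $\Phi(S')=H$, giving surjectivity.

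Finally, injectivity. Suppose $S_1, S_2$ are distinct maximal stable sets of $f(G)$ with $\Phi(S_1)=\Phi(S_2)$. Viewing them in $G$, they are distinct maximal stable sets of $G$ inducing the same minimal split completion, so Lemma \ref{lem: kill twins to avoid same split completion} applies: their symmetric difference is a pair of true twins $x_1,x_2$ of $G$ with $N_G[x_1]=N_G[x_2]$. But $x_1,x_2 \in V(f(G))$, and $f(G)$ was constructed to contain at most one vertex from each class of pairwise true twins; two vertices that are true twins in $G$ and both survive in $f(G)$ cannot exist (removing $B$ does not create new true twins, as noted in the paragraph before the theorem, so $x_1,x_2$ would already be true twins in $G[V\setminus B]$ and only one of them is kept). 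This contradiction forces $S_1=S_2$.

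The main obstacle I anticipate is not any single deep step but rather the bookkeeping around the two-stage construction of $f(G)$: one must be careful that (i) deleting redundant-type vertices in $B$ does not destroy or create the relevant twin/redundancy structure used later, (ii) maximality of stable sets is preserved under \emph{both} deletion stages, and (iii) the "twin-swap" argument — that a maximal stable set of a graph with one twin removed lifts to a maximal stable set of the full graph and vice versa, inducing the same completion via Lemma \ref{lem: kill twins to avoid same split completion} — is invoked consistently in both the surjectivity and injectivity arguments. Once these are nailed down, the bijection itself is essentially a direct assembly of Lemmas \ref{th1}, \ref{lem: min completion iff}, \ref{lem: max stable set when removing redundant vertices}, and \ref{lem: kill twins to avoid same split completion}.
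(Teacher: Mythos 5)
There is a genuine gap, and it sits exactly where you waved at ``bookkeeping'': the twin-swap step of your surjectivity argument. You replace each vertex of the maximal stable set $S$ by the retained representative of its true-twin class and invoke Lemma \ref{lem: kill twins to avoid same split completion} to conclude the new set induces the same split completion. But that lemma is an equivalence whose ``same completion'' direction requires the extra hypothesis $N_G[x_1]=N_G[x_2]=V\setminus(S_1\cap S_2)$; swapping a true twin whose closed neighbourhood does not cover all of $V\setminus S$ \emph{changes} the induced completion. Concretely, take $G=2K_2$ with edges $xx'$ and $za$: there are no redundant vertices, so $B=\varnothing$, while $\{x,x'\}$ and $\{z,a\}$ are true-twin pairs. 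Collapsing every twin class unconditionally (which is how you, following the paragraph before the theorem, build $f(G)$) leaves an edgeless graph on $\{x,z\}$ with a single maximal stable set, whereas $G$ has four minimal split completions (the four $P_4$'s obtained by adding one cross edge, induced by the four maximal stable sets of $G$). For instance $\{x',z\}$ induces the completion adding $xa$, and swapping $x'$ for $x$ gives the completion adding $x'a$, a different one, precisely because $N[x']=\{x,x'\}\neq V\setminus\{z\}$. So with that construction the map $\Phi$ is not surjective and the bijection you are proving does not hold; no rearrangement of the surjectivity argument can save it.

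The repair is to delete a marked (redundant or twin) vertex $v$ only when $V\setminus N(v)$ is a stable set, which is what Procedure \ref{alg1} actually does -- the informal two-stage description of $f(G)$ and the procedure diverge, and the theorem is true only for the latter. Under that rule your argument does go through: if a deleted twin $x$ lies in a maximal stable set $S$ of $G$, then stability of $V\setminus N(x)$ forces $S=V\setminus N(x)$, hence $N[x]=N[x']=V\setminus(S\setminus\{x\})$, and Lemma \ref{lem: kill twins to avoid same split completion} now legitimately gives that the swap preserves the completion. You must then also adjust injectivity, since this $f(G)$ \emph{can} contain pairs of true twins (it does for $2K_2$): instead of ``no twins survive'', argue that if two distinct maximal stable sets of $f(G)$ induced the same completion, the twin pair provided by the lemma would have $V\setminus N(x_1)=S_1$ stable, so $x_1$ would have been deleted by the procedure, a contradiction. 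Your well-definedness step (lifting maximality through both deletion stages) is fine as written.
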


\begin{proof}
Let $B$ as in the definition of $f(G)$. 
Observe that $G[V\setminus B]$ cannot contain a redundant vertex $x$ of which the non-neighbourhood in $G[V\setminus B]$ is a stable set,
and the same applies to $f(G)$. Moreover, a set $I\subseteq V(f(G))$ is a maximal stable set of $f(G)$ if and only if it is a maximal stable set of $G[V\setminus B]$ included in $V(f(G))$ (because adding a true twin to an existing vertex in $f(G)$ preserves the maximality of $I$).
So, by combining Lemmas \ref{lem: max stable set when removing redundant vertices} and \ref{lem: min completion iff}, a set $I\subseteq V(f(G))$ is a maximal stable set of $f(G)$ if and only if the split completion induced by $I$ is a minimal split completion of $G$.
This, combined with the non-existence of true twins in $f(G)$ and Lemma \ref{lem: kill twins to avoid same split completion}, ensures the bijection.
\end{proof}

The following procedure produces the auxiliary graph $f(G)$ mentioned in Theorem \ref{thm: bijection split}, containing no redundant or true twin vertex $v$ such that $V\setminus N(v)$ is a stable set.

\begin{proc}\label{alg1}
Construction of the auxiliary graph.

\textbf{Input}: a graph $G$

\textbf{Output}: the auxiliary graph $f(G)$
\begin{enumerate}
    \item Mark redundant and true twin vertices:\label{etape2}
    \begin{itemize}
        \item Begin with all vertices unmarked.
        \item For each unmarked $x\in V$, search the list of unmarked $y\in N(x)$.
        If $N[x]\subseteq N[y]$, mark $y$.
    \end{itemize}
    \item For each marked vertex $x$, \label{etape2.5}
    \begin{itemize}
    		\item[] If $V\setminus N[x]$ is a stable set of $G$, then remove $x$.
    \end{itemize}
\end{enumerate}
\end{proc}

During Step \ref{etape2}, every redundant vertex will be marked, as well as all but one vertex of each set of pairwise true twin vertices.

\paragraph{}
Theorem \ref{thm: bijection split} ensures that we can enumerate exactly once every minimal split completion of an input graph $G$ by running Procedure \ref{alg1} followed by an algorithm that enumerates all maximal stable sets of $f(G)$.
Since Procedure \ref{alg1} is polynomial in time and in space, the complexity of this algorithm only depends on the complexity of the algorithm used for the enumeration of maximal stable sets.
We are looking for such an algorithm running in polynomial delay and polynomial space.
For example, the algorithm given in \cite{tsukiyama} fulfils these complexity requirements and allows to state the following.

\begin{thm}\label{thm:split completions}
Minimal split completions of a graph can be enumerated in polynomial delay and polynomial space.
\end{thm}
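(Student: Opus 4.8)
The plan is to combine the structural work already done with an off-the-shelf maximal–independent–set enumerator. Given an input graph $G$ on $n$ vertices, I would first run Procedure \ref{alg1} to build an auxiliary graph $f(G)$, and check that this procedure is both correct and efficient. For correctness, one argues that Step \ref{etape2} marks exactly the vertices that should be discarded or collapsed: scanning, for each still-unmarked $x$, its unmarked neighbours $y$ and marking $y$ whenever $N[x]\subseteq N[y]$ guarantees that every redundant vertex is eventually marked (if $N[u]\subsetneq N[v]$ with $u$ non-redundant, then $u$ stays unmarked and marks $v$), and that from each class of pairwise true twins all but one vertex are marked (true twins satisfy $N[x]=N[y]$, so whichever is processed first marks the others). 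Step \ref{etape2.5} then deletes precisely the marked vertices $x$ with $V\setminus N[x]$ a stable set, so what remains has no pair of true twins and no redundant vertex $x$ with $V\setminus N(x)$ stable, i.e. it is a valid $f(G)$ in the sense of Theorem \ref{thm: bijection split}. Efficiency is routine: Step \ref{etape2} performs $\bigo(n^2)$ closed-neighbourhood comparisons, each in $\bigo(n)$ time, and Step \ref{etape2.5} tests at most $n$ sets for stability in $\bigo(n^2)$ time each; the procedure stores only $G$, $f(G)$ and $\bigo(n)$ marks, so it runs in polynomial time and polynomial space.

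Next I would invoke Theorem \ref{thm: bijection split}: the map sending a maximal stable set $I$ of $f(G)$ to the split completion of $G$ induced by $I$, namely the graph $(V(G), E(G)\cup\{uv : u,v\in V(G)\setminus I\})$, is a bijection onto the set of all minimal split completions of $G$. This map is well defined on $V(G)$ even though $I\subseteq V(f(G))$, because by Lemma \ref{lem: max stable set when removing redundant vertices} (together with the remark in the proof of Theorem \ref{thm: bijection split} that adding back true twins preserves maximality) the set $I$ is also a maximal stable set of $G$; and it is computable from $I$ in $\bigo(n^2)$ time. Hence, up to a polynomial post-processing per solution, enumerating all minimal split completions of $G$ reduces to enumerating all maximal stable sets of $f(G)$.

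Finally I would plug in a maximal–independent–set enumeration algorithm with polynomial delay and polynomial space; the classical algorithm of Tsukiyama, Ide, Ariyoshi and Shirakawa \cite{tsukiyama} does exactly this, with delay polynomial in the size of its input graph, which here is (an induced subgraph of) $G$ and so has size $\bigo(n)$. Running it on $f(G)$ and, for each emitted maximal stable set $I$, outputting the split completion induced by $I$, produces every minimal split completion of $G$ exactly once by Theorem \ref{thm: bijection split}. The time bookkeeping is then immediate: the construction of $f(G)$ is a one-time polynomial preprocessing performed before the first output; the delay between two consecutive outputs is the delay of \cite{tsukiyama} on $f(G)$ plus the $\bigo(n^2)$ cost of assembling one completion; and the total working space stays polynomial throughout. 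The only point needing genuine care is the exactness of the correspondence — that no minimal split completion is missed and none is produced twice — and this is precisely what the chain of Lemmas \ref{th1}, \ref{lem: min completion iff}, \ref{th4}, \ref{th5}, \ref{lem: max stable set when removing redundant vertices}, \ref{lem: kill twins to avoid same split completion} and Theorem \ref{thm: bijection split} already supplies; everything else is packaging, so I expect no serious obstacle beyond stating clearly that a one-time polynomial preprocessing plus a per-solution polynomial map does not spoil the polynomial-delay, polynomial-space guarantees of the underlying enumerator.
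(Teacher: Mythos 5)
Your proposal is correct and follows essentially the same route as the paper: build the auxiliary graph $f(G)$ via Procedure \ref{alg1}, use the bijection of Theorem \ref{thm: bijection split}, and run the polynomial-delay, polynomial-space maximal independent set enumerator of \cite{tsukiyama} on $f(G)$, converting each maximal stable set into its induced split completion. The extra details you supply (correctness of the marking step, per-solution conversion cost) are consistent with, and merely elaborate on, the paper's argument.
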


According to Lemma \ref{th1}, the number of maximal split completions of a graph $G$ is at most the number of its maximal stable sets.
It is proven in \cite{moon-moser1965} that this number can be as large as $3^{n/3}$, reached for a graph built from $n/3$ disjoint triangles as in Figure \ref{triangles}.
\begin{figure}[!ht]
	\centering
	\includegraphics[scale=0.8]{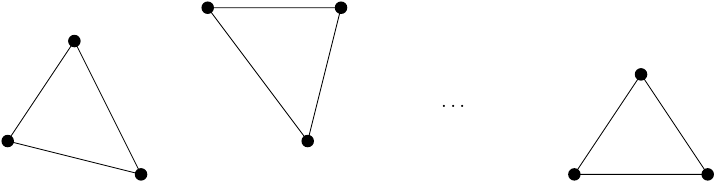}
	\caption{A graph built from $n/3$ disjoint triangles.}
	\label{triangles}
\end{figure}
Lemma \ref{lem: min completion iff} states that there is a bijection between the minimal split completions and the maximal stable sets of a graph $G$ with no redundant vertex $u$ such that $V(G)\setminus N(u)$ is a stable set.
The graph presented in Figure \ref{triangles} verifies this property.

\begin{thm}
The upper bound $3^{n/3}$ on the number of minimal split completions is tight.
\end{thm}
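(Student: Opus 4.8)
The plan is to exhibit, for every $n$ divisible by $3$ with $n\geq 6$, an $n$-vertex graph having exactly $3^{n/3}$ minimal split completions. Together with the upper bound recalled just above — by Lemma \ref{th1} the number of minimal split completions of an $n$-vertex graph is at most its number of maximal stable sets, which is at most $3^{n/3}$ \cite{moon-moser1965} — this shows that $3^{n/3}$ is tight. The candidate is the graph $G_k$ of Figure \ref{triangles}, the disjoint union of $k\geq 2$ triangles $T_1,\dots,T_k$ with $T_i=\{a_i,b_i,c_i\}$, which has $n=3k$ vertices.

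First I would record two structural facts about $G_k$. (i) $G_k$ has no redundant vertex: for $v\in T_i$ one has $N_{G_k}[v]=T_i$, so the closed neighbourhoods take only the $k$ pairwise disjoint values $T_1,\dots,T_k$, none of which is strictly contained in another. (ii) For every vertex $v$, the set $V(G_k)\setminus N_{G_k}[v]$ is the union of the $k-1\geq 1$ triangles not containing $v$, hence is not a stable set. Facts (i) and (ii) together say that Procedure \ref{alg1} removes nothing from $G_k$ — the set $B$ in the construction of $f(G_k)$ is empty by (i), and by (ii) no marked (true-twin) vertex $x$ has $V\setminus N[x]$ stable — so $f(G_k)=G_k$.

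Next I would count the maximal stable sets of $G_k$. A stable set meets each $T_i$ in at most one vertex, and one that misses some $T_i$ entirely can be enlarged inside that $T_i$; hence the maximal stable sets of $G_k$ are exactly the sets choosing one vertex from each $T_i$, so there are $3^k=3^{n/3}$ of them. Applying Theorem \ref{thm: bijection split} with $f(G_k)=G_k$ then yields a bijection between the minimal split completions of $G_k$ and these $3^{n/3}$ maximal stable sets, which finishes the proof. (Alternatively one may bypass Theorem \ref{thm: bijection split}: Lemma \ref{th1} shows every minimal split completion is induced by some maximal stable set; Lemma \ref{lem: min completion iff} together with (i) shows every maximal stable set induces a minimal split completion; and by Lemma \ref{lem: kill twins to avoid same split completion}, two distinct maximal stable sets $S_1,S_2$ induce the same completion only if $S_1\triangle S_2=\{x_1,x_2\}$ is a pair of true twins with $N[x_1]=V\setminus(S_1\cap S_2)$, which is impossible here since $|N[x_1]|=3$ while $|V\setminus(S_1\cap S_2)|\geq n-(k-1)=2k+1\geq 5$.)

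The step requiring the most care — and the only place the hypothesis $k\geq 2$ is genuinely used — is fact (ii): for $k=1$ the triangle $K_3$ is already a split graph, so it has a single minimal split completion rather than $3^{3/3}=3$; concretely, for $k=1$ one has $V\setminus N[v]=\emptyset$, which is a stable set, so $f(K_3)$ collapses to a single vertex. The same cardinality gap $2k+1>3$ is exactly what makes the twin comparison in the alternative argument go through. All remaining verifications are routine bookkeeping about disjoint triangles.
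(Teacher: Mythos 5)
Your proof is correct and follows essentially the same route as the paper: the extremal example is the same disjoint union of $n/3$ triangles, the upper bound comes from Lemma \ref{th1} together with Moon--Moser, and the lower bound comes from observing that this graph has no redundant vertex so that Lemmas \ref{lem: min completion iff}, \ref{th1} and \ref{lem: kill twins to avoid same split completion} give the bijection with its $3^{n/3}$ maximal stable sets. The only difference is that you spell out the injectivity step (via the true-twin cardinality comparison) and the $k\geq 2$ caveat, which the paper leaves implicit.
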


\section{Cographs}
\label{sec:cograph}

Cographs (see \cite{corneil}) can be constructed recursively using disjoint union and complement.
Their characterisations are various and numerous: also known as $P_4$-free graphs, cographs benefit from a unique tree representation, the \emph{cotree}.
Because of this, cographs can be recognised in linear time \cite{corneil-algo}.

An \emph{induced sub-cograph} of $G$ is an induced subgraph that is also a cograph; and it is moreover maximal if it is inclusion-wise maximal among all induced sub-cographs of $G$. To the best of our knowledge, no polynomial-delay algorithm is yet known for the enumeration of maximal induced sub-cographs of a graph.


In the case of cographs, the \emph{restricted problem} defined in \cite{cohen,lawler} cannot be solved in polynomial time.
Indeed, the graph $G$ of Figure \ref{pbrestreint} verifies that $G-v$ is a cograph, whereas removing one vertex among $\{x_i,y_i\}$ for all $1\leq i\leq k$ produces a maximal induced sub-cograph of $G$.
This way, we obtain $2^k$ maximal induced sub-cographs of $G$, which implies that the restricted problem has an exponential number of solutions in this case.
Hence, the approach presented in \cite{cohen} cannot be used to produce a polynomial-delay algorithm for the enumeration of maximal induced sub-cographs.

\begin{figure}[!ht]
	\centering
	\includegraphics[page=1]{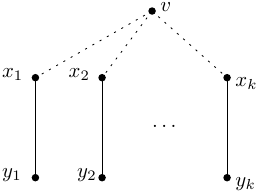}
	\caption{The graph $G-v$ is a cograph, and $G$ has an exponential number of maximal induced sub-cographs.}
	\label{pbrestreint}
\end{figure}

Nevertheless, adding one vertex to a cograph produces a polynomial number of induced $P_4$.
In this case, the results presented in \cite{khachiyan} can be used to obtain an incremental polynomial time algorithm for the enumeration of maximal induced sub-cographs.
We show here that this time complexity can be improved to polynomial delay with the \emph{Proximity Search} introduced in \cite{conte-uno}.

Proximity Search will also be used in Section \ref{sec:threshold} to enumerate all minimal threshold deletions of a given graph.

\paragraph{}
As many enumeration paradigms (for example \cite{avis-fukuda,cohen}), Proximity Search relies on an organised walk among all solutions.
The general idea is to go from a solution to another using a problem-specific transition operation, which cannot be reversed in general, thus implicitly building a directed \emph{solution graph} (the vertices of which are all wanted solutions of the problem).
Showing that the solution graph is strongly connected, by showing we can increase the \emph{proximity} to a target solution at each step, ensures that the search will go trough every solution: all of them will be enumerated.

There are three major requirements to apply Proximity Search.
\begin{itemize}
	\item A notion of proximity $\prox$ between two solutions.
	This proximity does not need to induce a distance because it is not symmetric in general, but it must satisfy $\vert S\prox S'\vert = \vert S'\vert \Rightarrow S = S'$.
	\item A function $\neigh$, computing in polynomial time all (out-)neighbours of a given solution.
	In particular, every solution has a polynomial number of neighbours.
	\item A function $\complete$, adding elements to a given solution, in order to produce a maximal one.
	The use of this function is mostly hidden in the computation of $\neigh$.
	Therefore, the function $\complete$ needs to be computable in polynomial time.
\end{itemize}

The general \emph{Proximity Search} algorithm is the following.

\begin{alg}\label{alg2}
\emph{Proximity Search \cite{conte-uno}}

\textbf{Input}: a graph $G$, an enumeration problem $\PP$

\textbf{Output}: all (maximal) solutions of $\PP$ in $G$ 
\begin{enumerate}
    \item $\mathcal{S}\leftarrow \varnothing$
    \item Let $S$ be an arbitrary solution of $\PP$.
    \item Run $\enum(S,\mathcal{S})$.
\end{enumerate}
~
\begin{enumerate}
    \item[] Function $\enum(S,\mathcal{S})$:
    \item $\mathcal{S} \leftarrow \mathcal{S}\cup\{S\}$\\
    {\tt Output $S$ if recursion depth is even.}
    \item For each $S'\in\neigh(S)$ do: \label{etape2b}\\
    \hspace*{1cm} if $S'\notin \mathcal{S}$, then $\enum(S',\mathcal{S})$\\
    {\tt Output $S$ if recursion depth is odd.}
\end{enumerate}
\end{alg}

For all $S$, $\neigh(S)$ is the set of solutions that are neighbours of $S$.
As discussed earlier, these solutions are obtained \textit{via} a transition operation which is specific to the problem.
To obtain a polynomial-delay algorithm, we need to be able to compute all neighbours of $S$ in time polynomial in $n$ for all $S$.
This ensures that the function $\enum$ called in the algorithm will be polynomial.

Note that it is also necessary to be able to find one first solution in polynomial time to have a polynomial-delay algorithm.
Plus, if exponential space is allowed, it is possible to test in linear time if $S'\in \mathcal{S}$, using a dictionary in which insertion and verification (Step \ref{etape2b}) are possible in time polynomial in the size of $S'$.
With all these elements, Algorithm \ref{alg2} runs in polynomial delay.

\paragraph{}
All cographs can be constructed from an isolated vertex by adding true or false twins, as shown in \cite{corneil}.
From such a construction, we obtain an order $v_1\ldots v_n$ on the vertices of a cograph $G$, such that for all $j\in \{1,\ldots,n\}$, $v_j$ has at least one (true or false) twin in $G[v_1,\ldots,v_j]$.
Every maximal induced sub-cograph can then be identified with an ordered set of vertices; this will be done in the sequel.

To define if two solutions are ``close'' (high proximity) or not, it seems natural to compare their twin construction orderings, but these orderings are not unique; the first thing to do is then to make them canonical.

Remark that, as the class of cographs is hereditary, the greedy $\complete$ function which adds one vertex at a time while the result is a cograph can be used, and it runs in polynomial time (because the recognition algorithm for cographs does).
Hence, to use Proximity Search on the class of cographs, we need:
\begin{itemize}
	\item a \emph{canonical ordering} of $V(G')$ for each induced sub-cograph $G'$ of $G$, based on an arbitrary ordering of the vertices of $G$; this canonical ordering is then used to define the \emph{proximity} between two maximal induced sub-cographs;
	\item a \emph{neighbouring function}, enabling to go from one maximal induced sub-cograph $S$ to some other maximal induced sub-cographs, among which one will provably be closer to a target solution, by containing a well-chosen vertex $x\notin S$.
\end{itemize}

Let $G$ be a graph for which we want to enumerate all maximal induced sub-cographs.
Let $u_1,\ldots,u_n$ be an arbitrary order on the vertices of $G$; $u_i$ is said to be smaller than $u_j$ if $i<j$.

\begin{defi}[Canonical ordering for induced cographs] \label{ordre-cographes}
The canonical ordering\footnote{The canonical ordering is not actually used in the procedure but it can be found in polynomial time from the cotree associated to $S$.} of an induced sub-cograph $S$ of $G$ is the lexicographically smallest twin construction ordering of $S$, with respect to $u_1,\ldots,u_n$.
In particular, it is a sequence $v_1\ldots v_k$ of vertices of $G$ such that $v_1$ is the smallest vertex of $G$ belonging to $S$.
\end{defi}

The canonical ordering defined above enables us to define a notion of proximity between two solutions $S$ and $S'$ as follows.

\begin{defi}[Proximity for maximal induced sub-cographs] \label{proximite-cographes}
Let $S$ and $S'$ be two solutions, \textit{i.e.} two maximal induced sub-cographs of $G$.
Denote $k=\vert S'\vert$ and consider $v'_1\ldots v'_k$ the canonical ordering of $S'$.
The \emph{proximity} $S\prox S'$ between $S$ and $S'$ is the longest prefix $v'_1\ldots v'_i$ of the canonical ordering of $S'$ such that $\{v_1', \ldots, v_i'\}\subseteq S$.
In this case, $\vert S\prox S'\vert = i$.
\end{defi}

Note that this notion of proximity is not symmetric and the only solution $S$ maximising $S\prox S'$ is $S'$ itself.

The next step is to define a neighbouring function, called $\neigh$, to construct the solution graph.
To define $\neigh$, we use a greedy function $\complete$, following the arbitrary ordering $u_1, \ldots, u_n$, to complete any sub-cograph into a maximal one, as already mentioned. 
Recall that for all $S$, the cardinality of $\neigh(S)$ must be polynomial in $n$.
For simplicity of notation, we might sometimes refer to $S$ as a set of vertices instead of a graph (without introducing any ambiguity on the set of edges, since we are only considering induced subgraph of $G$ in this subsection).

\begin{vois}
Let $S$ be a  maximal induced sub-cograph.
For all $x\notin S$ and all $y\in S$, define
$$S'_{xy}:=\complete((S\setminus(N_G(x)\triangle N_S(y)))\cup\{x,y\})$$

Then let $\neigh(S):=\left\lbrace S'_{xy}\mid x\notin S,~y\in S\right\rbrace$.
Each solution has at most a quadratic number of neighbours.
\end{vois}

The idea behind this definition of neighbouring function is to add vertex $x$ and make sure that $x$ and $y$ become twins (true or false according to their adjacency) in $S'_{xy}$.
The neighbouring function for induced sub-cographs is illustrated in Figure \ref{completesommets}.
Circled vertices represent the elements of $(N_G(x)\triangle N_S(y))$, 
two of which are in $S$, so are removed so that $x$ and $y$ become twins in $(N_G(x)\triangle N_S(y)))\cup\{x,y\}$.

\begin{figure}[!ht]
	\centering
	\includegraphics[scale=.9]{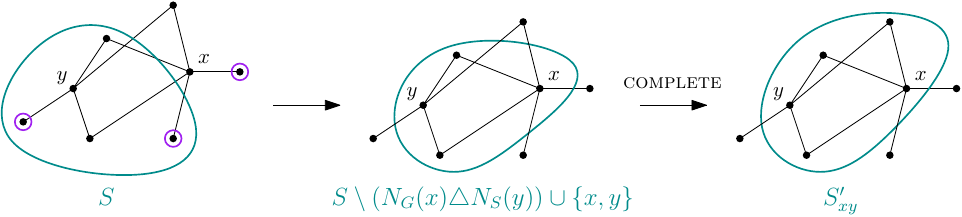}
	\caption{Illustration of the neighbouring function for maximal induced sub-cographs.}
	\label{completesommets}
\end{figure}

\begin{lem}\label{le1}
For every maximal induced sub-cograph $S$, and for every $x\notin S$, $y\in S$, the graph $S'_{xy}$ is a maximal induced sub-cograph of the input graph $G$.
\end{lem}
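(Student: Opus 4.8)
\textbf{Proof plan for Lemma \ref{le1}.}
The plan is to show two things separately: first, that the set $T := (S\setminus(N_G(x)\triangle N_S(y)))\cup\{x,y\}$ induces a cograph in $G$; and second, that $\complete$ applied to $T$ yields a maximal induced sub-cograph. The second point is essentially free: by the general properties of $\complete$ stated in the Proximity Search setup (it greedily adds vertices of $G$, following the fixed order $u_1,\dots,u_n$, as long as the result stays a cograph, and the class of cographs is hereditary and polynomially recognisable), the output $S'_{xy} = \complete(T)$ is by construction an induced sub-cograph of $G$ that is inclusion-wise maximal, \emph{provided} its input $T$ is itself an induced sub-cograph. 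So the whole content of the lemma is the first point: $G[T]$ is $P_4$-free.

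The key observation driving the first point is that, in $G[T]$, the vertices $x$ and $y$ are twins. Indeed, removing from $S$ the set $N_G(x)\triangle N_S(y)$ kills exactly the vertices on which $x$ and $y$ disagree: for any remaining vertex $z\in S\setminus(N_G(x)\triangle N_S(y))$ we have $z\in N_G(x) \iff z\in N_S(y)$, so $x$ and $y$ have the same neighbours inside $T\setminus\{x,y\}$. Thus $x,y$ are twins in $G[T]$ (true twins if $xy\in E(G)$, false twins otherwise). Now suppose for contradiction that $G[T]$ contains an induced $P_4$. A $P_4$ has no pair of twins among its four vertices (in a $P_4$ on $a$-$b$-$c$-$d$, no two vertices have the same neighbourhood within the path), so this $P_4$ cannot contain both $x$ and $y$. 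If it avoids $x$ entirely, then it is an induced $P_4$ inside $G[T\setminus\{x\}] = G[(S\setminus(N_G(x)\triangle N_S(y)))\cup\{y\}]$, which is an induced subgraph of $G[S]$ — contradicting that $S$ is a cograph. If the $P_4$ contains $x$ but not $y$, replace $x$ by $y$: since $x$ and $y$ are twins in $G[T]$, the resulting four vertices still induce a $P_4$ (swapping a vertex for its twin preserves the induced subgraph up to isomorphism), and now this $P_4$ lies inside $G[S]$, again a contradiction. Hence $G[T]$ is $P_4$-free.

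The main obstacle — and the only place requiring care — is the twin-swap argument: one must check that exchanging $x$ for its twin $y$ genuinely produces an induced $P_4$ again, including the subcase where the original $P_4$ uses both $x$ and (the edge or non-edge) $xy$ is irrelevant because $y\notin$ that $P_4$. Here the twin relation in $G[T]$ is exactly what guarantees that $y$ has, towards the other three vertices of the path, the same adjacencies $x$ had, so the path structure is preserved whether or not $xy$ is an edge. One should also note that $y$ is distinct from the other three path vertices (it is not on the original $P_4$ by assumption), so no collision occurs. With $G[T]$ shown to be an induced sub-cograph of $G$, the preceding paragraph gives that $\complete(T)=S'_{xy}$ is a maximal induced sub-cograph of $G$, completing the proof.
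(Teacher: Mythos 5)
Your proposal is correct and follows essentially the same route as the paper's proof: show that $\hat S=(S\setminus(N_G(x)\triangle N_S(y)))\cup\{x,y\}$ is a cograph by noting that $x$ and $y$ are twins there, so any induced $P_4$ must contain $x$ but not $y$, and swapping $x$ for $y$ yields a $P_4$ inside the cograph $S$, a contradiction; maximality then follows from the $\complete$ function. Your version merely spells out the twin-swap and case analysis in more detail than the paper does.
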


\begin{proof}
Let $\hat{S}:=(S\setminus(N_G(x)\triangle N_S(y)))\cup\{x,y\}$ and suppose that $\hat{S}$ is not a cograph.
Then there is an induced $P_4$ in $\hat{S}$.
Remark that $\hat{S}\setminus\{x\}$ is a cograph, as an induced subgraph of the cograph $S$.
Hence necessarily $x$ participates to this newly-induced $P_4$, and $y$ does not because it is twin with $x$.
But substituting $x$ for $y$ produces an induced $P_4$ in $\hat{S}\setminus\{x\}$, which is excluded.
Therefore $\hat{S}$ is a cograph.

The function $\complete$, when given an induced sub-cograph of $G$, returns a maximal one.
Consequently, $S'_{xy}$ is indeed a maximal induced sub-cograph of $G$.
\end{proof}

The previous lemma guarantees that the neighbouring function is well-defined and that all neighbours of a solution are also solutions.
Now that we have a solution graph, it remains to show that this solution graph is strongly connected; in other words, that every solution can be reached from any other one by increasing proximity at each step.

\begin{lem}\label{le2}
Let $S$ and $S^*$ be two distinct maximal induced sub-cographs of a given graph $G$.
There exist $x\notin S$ and $y\in S$ such that $S'_{xy}$ is a maximal induced sub-cograph of $G$, satisfying $\vert S'_{xy}\prox S^*\vert > \vert S\prox S^*\vert$.
\end{lem}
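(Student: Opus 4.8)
The plan is to exhibit, for any two distinct maximal induced sub-cographs $S$ and $S^*$, an explicit pair $(x,y)$ with $x \notin S$, $y \in S$ for which the neighbour $S'_{xy}$ strictly improves the proximity to $S^*$. Write $v'_1 \ldots v'_k$ for the canonical ordering of $S^*$, and let $i = |S \prox S^*|$, so that $v'_1, \ldots, v'_i \in S$ but $v'_{i+1} \notin S$ (note $i < k$, since $i = k$ would force $S^* \subseteq S$, hence $S = S^*$ by maximality, contradicting distinctness). The natural choice is $x := v'_{i+1}$, the first vertex of $S^*$'s canonical ordering that $S$ misses. For $y$, I would use the fact that in the twin construction ordering of $S^*$, the vertex $v'_{i+1}$ has a (true or false) twin among $v'_1, \ldots, v'_i$ inside $S^*[\{v'_1,\ldots,v'_{i+1}\}]$; call that twin $y$. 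Since $y \in \{v'_1,\ldots,v'_i\} \subseteq S$, this $y$ is a legitimate choice, and by Lemma~\ref{le1} the graph $S'_{xy}$ is indeed a maximal induced sub-cograph of $G$.

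The key step is then to argue that $v'_1, \ldots, v'_{i+1}$ all belong to $S'_{xy}$, which immediately gives $|S'_{xy} \prox S^*| \geq i+1 > i$. By construction $x = v'_{i+1} \in S'_{xy}$ and $y \in S'_{xy}$. For the remaining vertices $v'_1, \ldots, v'_i$ (other than $y$), I need to check that none of them is discarded when we form $\hat S = (S \setminus (N_G(x) \triangle N_S(y))) \cup \{x,y\}$, i.e. that $v'_j \notin N_G(x) \triangle N_S(y)$ for each such $j$. This is where the choice of $y$ as a twin of $x$ within $S^*[\{v'_1,\ldots,v'_{i+1}\}]$ pays off: for every $v'_j$ with $j \leq i$, the adjacency of $v'_j$ to $x$ in $G$ equals its adjacency to $y$ in $S^*[\{v'_1,\ldots,v'_i\}]$ — that is precisely what "$x$ and $y$ are twins in $S^*[\{v'_1,\ldots,v'_{i+1}\}]$" means. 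One then has to reconcile $N_S(y)$ with this: since $\{v'_1,\ldots,v'_i\} \subseteq S$, the neighbourhood $N_S(y)$ restricted to $\{v'_1,\ldots,v'_i\}$ coincides with $N_{S^*}(y)$ restricted to that set (adjacency is inherited from $G$ in both cases), so $v'_j \in N_G(x) \iff v'_j \in N_S(y)$ for $j \leq i$, hence $v'_j \notin N_G(x) \triangle N_S(y)$. Thus no $v'_j$ with $j \leq i$ is removed, and since $\complete$ only adds vertices, all of $v'_1,\ldots,v'_{i+1}$ survive into $S'_{xy}$.

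One subtlety I would be careful about: the canonical ordering of $S'_{xy}$ is not under our control, and proximity is measured along $S^*$'s canonical ordering, so it suffices to know $\{v'_1,\ldots,v'_{i+1}\} \subseteq S'_{xy}$ — we do not need $v'_1 \ldots v'_{i+1}$ to be a prefix of anything in $S'_{xy}$. Since $|S'_{xy} \prox S^*|$ is by definition the length of the longest prefix $v'_1 \ldots v'_\ell$ of $S^*$'s canonical ordering contained in $S'_{xy}$, containment of the first $i+1$ vertices gives $|S'_{xy}\prox S^*| \geq i+1$ directly. The main obstacle, and the step to write out carefully, is the twin-compatibility computation in the preceding paragraph: verifying that conditioning on a twin $y$ inside $S^*$ really does protect all of $v'_1,\ldots,v'_i$ from the symmetric-difference removal, and in particular handling the true-twin versus false-twin cases uniformly and checking the edge $xy$ itself does not cause trouble (it does not, since $y$ is never removed and $x$ is added). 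Everything else — that $i<k$, that $S'_{xy}$ is a valid solution, that $\complete$ is monotone — is either immediate or already established.
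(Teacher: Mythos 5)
Your choice of $x$ and $y$ and the core twin argument are exactly those of the paper's proof, and the reconciliation you point out (that $N_S(y)$ and $N_{S^*}(y)$, hence $N_G(y)$, agree on $\{v'_1,\ldots,v'_i\}$ because both $S$ and $S^*$ are induced subgraphs of $G$) is correct and is in fact the step the paper passes over silently. However, there is one small hole: your construction of $y$ presupposes that $v'_{i+1}$ has a twin among $v'_1,\ldots,v'_i$ in $S^*[\{v'_1,\ldots,v'_{i+1}\}]$, which fails when $i=\vert S\prox S^*\vert=0$, i.e.\ when the very first vertex $v'_1$ of the canonical ordering of $S^*$ is not in $S$ (the first vertex of a twin construction ordering has no earlier twin). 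This case genuinely occurs and is handled separately in the paper: one simply takes $x=v'_1$ and \emph{any} $y\in S$, since $x\in S'_{xy}$ by construction already gives $\vert S'_{xy}\prox S^*\vert\geq 1>0$. With that base case added, your argument is complete and coincides with the paper's proof.
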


\begin{proof}
Denote by $v^*_1,\ldots,v^*_l$ the canonical ordering of $S^*$.

If $\vert S\prox S^*\vert =0$, then $v^*_1\notin S$.
Let $y\in S$, then $v_1^*\in S'_{v_1^*y}$ and $\vert S'_{v^*_1 y}\prox S^*\vert \geq 1 > \vert S\prox S^*\vert$.

Else, let $k$ be the smallest index such that $v^*_k\notin S$ (in this case we have $\vert S\prox S^*\vert = k-1$).
There exists $i<k$ such that $v^*_k$ and $v^*_i$ are twins in $S^*[v^*_1,\ldots,v^*_k]$.
By minimality of $k$, $v^*_i$ is a vertex of $S$.

Consider $S'_{v^*_k v^*_i}$.
It is a cograph by Lemma \ref{le1}.
It remains to show that $S'_{v^*_k v^*_i}$ contains $v^*_1,\ldots,v^*_k$.
For contradiction, assume that there exists $j\in\{1,\ldots,k-1\}$ such that $v^*_j \notin S'_{v^*_k v^*_i}$.
Since $\{v^*_1,\ldots,v^*_{k-1}\}\subseteq S\cap S^*$, it implies that $v^*_j\in N_G(v^*_k)\triangle N_S(v^*_i)$.
But by definition of the canonical ordering of $S^*$, $v^*_i$ and $v^*_k$ are twins in $G[v^*_1,\ldots,v^*_k]$, so no vertex of $\{v^*_1,  \ldots, v^*_k\}$ can be in $N_G(v^*_k)\triangle N_G(v^*_i)$.
Therefore, $S'_{v^*_k v^*_i}$ contains $v^*_1,\ldots,v^*_k$.

Consequently, $\vert S'_{v^*_k v^*_i}\prox S^*\vert \geq k > \vert S\prox S^*\vert$.
\end{proof}

By Lemma \ref{le2}, it is possible to apply Proximity Search on induced cographs.
Algorithm \ref{alg2} can then be used to enumerate all maximal induced sub-cographs in polynomial delay, hence the following.

\begin{thm} \label{thm: induced cographs}
Maximal induced sub-cographs of a graph $G$ can be enumerated in polynomial delay, and exponential space.
\end{thm}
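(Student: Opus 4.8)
The plan is to assemble the ingredients established earlier in this subsection and verify that they meet the three formal requirements of the Proximity Search framework (Algorithm~\ref{alg2}), then invoke that framework to obtain polynomial delay. First I would check the proximity requirement: Definition~\ref{proximite-cographes} gives a proximity $\prox$ between maximal induced sub-cographs, and the remark following it already notes that $\vert S\prox S'\vert = \vert S'\vert$ forces $S=S'$ (the canonical ordering of $S'$ enumerates all of $V(S')$, so if a prefix of length $\vert S'\vert$ lies in $S$ then $S'\subseteq S$, and by maximality $S=S'$). Second, I would check the $\complete$ requirement: since cographs form a hereditary class recognisable in linear time \cite{corneil-algo}, the greedy completion following the fixed arbitrary order $u_1,\ldots,u_n$ runs in polynomial time and always returns a maximal induced sub-cograph. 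Third, the $\neigh$ requirement: by the Neighbouring function definition each solution $S$ has at most $\vert V\setminus S\vert\cdot\vert S\vert = \bigo(n^2)$ neighbours, each computable in polynomial time (one symmetric-difference computation plus one call to $\complete$), and Lemma~\ref{le1} guarantees every $S'_{xy}$ is again a maximal induced sub-cograph, so $\neigh$ is well-defined.

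Next I would address strong connectivity of the solution graph, which is the heart of the correctness argument but is already done for us: Lemma~\ref{le2} shows that from any solution $S\neq S^*$ there is a neighbour $S'_{xy}\in\neigh(S)$ with $\vert S'_{xy}\prox S^*\vert > \vert S\prox S^*\vert$. Since the proximity to $S^*$ is a nonnegative integer bounded by $\vert S^*\vert$, iterating this step reaches a solution $S$ with $\vert S\prox S^*\vert = \vert S^*\vert$, which by the proximity requirement equals $S^*$. Hence every solution is reachable from every other: the solution graph is strongly connected. It remains to recall why a strongly connected solution graph with polynomially many, polynomially-computable out-neighbours yields polynomial delay for Algorithm~\ref{alg2}: the alternating output rule (output on entering at even recursion depth, on leaving at odd depth) guarantees that between any two consecutive outputs only $\bigo(1)$ recursive calls are traversed, each of cost bounded by the time to compute $\neigh$ and to test membership in $\mathcal{S}$; using a dictionary on the (polynomial-size) vertex subsets, the membership test is polynomial, so the delay is polynomial. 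A first solution is obtained in polynomial time by calling $\complete(\varnothing)$.

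The main obstacle in this proof is not really a mathematical one but a bookkeeping one: one must make sure that every hypothesis of the abstract Proximity Search theorem of \cite{conte-uno} is literally satisfied by the cograph instantiation, in particular that the neighbouring function's output set is stable (every neighbour is genuinely a maximal solution, via Lemma~\ref{le1}) and that the canonical ordering used to define $\prox$ is itself computable in polynomial time — which holds because it can be read off the cotree of $S$, computable in linear time. Assuming \cite{conte-uno} as a black box, the only genuinely new content beyond Lemmas~\ref{le1} and~\ref{le2} is the verification of these polynomiality bounds, which is routine. Putting these pieces together gives the stated polynomial-delay (and, since a dictionary of all visited solutions is stored, exponential-space) enumeration algorithm for all maximal induced sub-cographs of $G$, proving Theorem~\ref{thm: induced cographs}.
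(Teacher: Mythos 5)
Your proposal is correct and follows essentially the same route as the paper: the paper's proof of Theorem~\ref{thm: induced cographs} is exactly the observation that the canonical ordering, proximity, greedy $\complete$, and the neighbouring function (validated by Lemma~\ref{le1}) satisfy the Proximity Search requirements, with Lemma~\ref{le2} supplying strong connectivity of the solution graph, after which Algorithm~\ref{alg2} gives polynomial delay and exponential space. Your additional bookkeeping (counting $\bigo(n^2)$ neighbours, the dictionary for visited solutions, obtaining a first solution via $\complete(\varnothing)$) matches the framework discussion already given in the paper before the theorem.
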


\section{Threshold graphs: minimal threshold deletions}
\label{sec:threshold}

\emph{Threshold graphs} are at the intersection between split graphs and cographs.
Therefore, they are characterised by forbidden subgraphs as the class of $(P_4, C_4, 2K_2)$-free graphs. This class is quite common in the literature (see for example the chapter dedicated to threshold graphs in \cite{golumbic-livre}).
Like the split and cograph properties, the threshold property is hereditary and closed under complement.

As proven in \cite{chvatal1977}, all threshold graphs can be constructed from a single isolated vertex by adding at each step either an isolated vertex or a universal vertex, \textit{i.e.} a vertex adjacent to all other vertices of the graph. 
This construction provides an ordering $v_1,\ldots,v_n$ on the vertices of a threshold graph $G$, such that for all $1\leq i\leq n$, either $v_i$ is isolated in $G[v_1,\ldots,v_i]$, or $v_i$ is universal in $G[v_1,\ldots,v_i]$.
If several orderings are possible, take the lexicographically smallest, according to an arbitrary order on the vertices, to make it canonical.

Minimal threshold deletions can then be enumerated with the help of Proximity Search, as we will show in the following.
Since the class of threshold graphs is stable under complement, it is equivalent to enumerate minimal threshold completions as well.

Recall that a threshold deletion of $G$ is a subgraph $S$ of $G$ such that $S$ is a threshold graph.
A threshold deletion is moreover \emph{minimal} if the subgraph is inclusion-wise maximal among all threshold subgraphs of $G$.
Since threshold graphs are closed under adding isolated vertices, any minimal threshold deletion $S$ satisfies $V(S)=V(G)$.
Hence, there will never be any ambiguity on the vertex set, so $S$ will often be used to denote both the subgraph and its set of edges $E(S)\subseteq E(G)$. 

As in Section \ref{sec:cograph}, in order to use Proximity Search we need to define a canonical ordering of a solution (here, on the edges, not on the vertices), a notion of proximity between two solutions, a \textsc{Complete} function to transform a threshold subgraph into a maximal one,
 and a neighbouring function to go from one solution to some others.

\paragraph{}
Consider the \textbf{edge canonical ordering} induced by the vertex canonical ordering.
That is to say, order the edges of a threshold graph $S$ in a sequence $((v_j v_i)_{i<j})_{j \leq n}$, where $v_1,\ldots,v_n$ is the (lexicographically smallest) threshold construction ordering of $S$.
The \textbf{proximity} between two minimal threshold deletions $S$ and $S^*$ is defined similarly as before as the longest prefix of $S^*$ included in $S$.
Finally, the class of threshold graphs is proven in \cite{heggernes2} to be \emph{sandwich-monotone}.
It is then possible to derive a polynomial $\complete$ function for threshold graphs, trying to add one edge at a time until we have a maximal threshold subgraph.

The suitable neighbouring function for this problem is the following: to transform a solution into another, choose a vertex and try to make it universal.

\begin{vois}
Let $G$ be a graph and let $S$ be a minimal threshold deletion of $G$.
For a vertex $x$ of $G$, we will build a graph $\widetilde{S}_x$ in which $x$ is universal in the construction ordering.
Define $E_G(x) := \{xt\mid t\in N_G(x)\}$, and consider the following sets of edges of $G$:
\begin{enumerate}
	\item $\widetilde{E}(x) := E_G(x)$;
	\item for all $v\in N_G(x)$, $\widetilde{E}(v) := \{vw\mid w\in N_S(v)\}\cup \{vx\}$;
	\item for every other vertex $v$, $\widetilde{E}(v) := \varnothing$.
\end{enumerate}
Let $\widetilde{S}_x := \complete\left( \bigcup\limits_{v\in V} \widetilde{E}(v)\right)$.
Define $\neigh(S) := \{\widetilde{S}_x\mid x\in V\}$.
Each solution has a linear number of neighbours.
\end{vois}

\begin{lem}
If $S$ is a minimal threshold deletion of a graph $G$, then for all $x\in V$, $\widetilde{S}_x$ is a minimal threshold deletion of $G$.
\end{lem}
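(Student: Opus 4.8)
The plan is to show that the graph $\widetilde{S}:=\bigl(V,\bigcup_{v\in V}\widetilde{E}(v)\bigr)$ obtained just before the call to $\complete$ is already a threshold subgraph of $G$; the lemma then follows immediately, because $\complete$ --- which exists thanks to the \emph{sandwich-monotonicity} of threshold graphs \cite{heggernes2} --- turns any threshold subgraph of $G$ into one that is maximal for inclusion, i.e.\ into a minimal threshold deletion of $G$.

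First I would check that $\widetilde{S}\subseteq G$: one has $\widetilde{E}(x)=E_G(x)\subseteq E(G)$, and for $v\in N_G(x)$ every edge of $\widetilde{E}(v)$ is either $vx\in E(G)$ or an edge $vw$ with $w\in N_S(v)$, hence in $E(S)\subseteq E(G)$, while all remaining sets $\widetilde{E}(v)$ are empty. Then I would read off the structure of $\widetilde{S}$ from the definition of $\neigh$: in $\widetilde{S}$ the vertex $x$ is adjacent to exactly $N_G(x)$; no edge of $\widetilde{S}$ is incident to a vertex of $V\setminus N_G[x]$, so every such vertex is isolated in $\widetilde{S}$; and on $N_G(x)$ the adjacencies of $\widetilde{S}$ coincide with those of $S$, that is $\widetilde{S}[N_G(x)]=S[N_G(x)]$. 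Equivalently, $\widetilde{S}$ is obtained from the induced subgraph $S[N_G(x)]$ by adding $x$ adjacent to all of $N_G(x)$ and then adding every vertex of $V\setminus N_G[x]$ as an isolated vertex.

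Given this description, threshold-ness is immediate from the construction characterisation of \cite{chvatal1977}: since threshold graphs are hereditary, $S[N_G(x)]$ is a threshold graph and therefore admits a construction ordering by isolated and universal vertices; extending that ordering by inserting $x$ as a universal vertex and then appending every vertex of $V\setminus N_G[x]$ as an isolated vertex yields a construction ordering of $\widetilde{S}$. Hence $\widetilde{S}$ is a threshold subgraph of $G$, and $\widetilde{S}_x=\complete(\widetilde{S})$ is a minimal threshold deletion of $G$. The step I expect to be the main obstacle is the second one: one must verify carefully, from the terse definition of the neighbouring function, that after forming the union $\bigcup_{v\in V}\widetilde{E}(v)$ no edge touching $V\setminus N_G[x]$ survives, so that $x$ is genuinely universal at the moment it is inserted into the construction ordering; once this is settled, the rest --- heredity of threshold graphs, the harmlessness of adjoining universal or isolated vertices, and the correctness of $\complete$ under sandwich-monotonicity --- is routine.
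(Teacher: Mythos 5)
Your proposal is correct and follows essentially the same route as the paper: it shows that $\widetilde{S}=\bigcup_{v\in V}\widetilde{E}(v)$ is a threshold subgraph of $G$ by observing that vertices outside $N_G[x]$ are isolated, that adjacencies inside $N_G(x)$ coincide with those of $S$, and that adding $x$ as a universal vertex to the (hereditarily threshold) graph $S[N_G(x)]$ and then isolated vertices preserves thresholdness, before invoking $\complete$. The "obstacle" you flag — checking that no edge touching $V\setminus N_G[x]$ survives the union — is exactly the structural claim the paper's proof asserts at the same point, so your argument matches it step for step.
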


\begin{proof}
Let $x\in V$.
It suffices to show that $\widetilde{S} := \bigcup\limits_{v\in V} \widetilde{E}(v)$ is a threshold graph.

In $\widetilde{S}$, all non-neighbours of $x$ are isolated vertices.
Besides, adjacencies between vertices of $N_G(x)$ have not been modified from $S$.
Moreover, since $S$ is a threshold graph, $S[N_G(x)]$ is also a threshold graph; the adding of $x$ as a universal vertex in $S[N_G(x)\cup\{x\}]$ implies that $S[N_G(x)\cup\{x\}]$ is also a threshold graph.
Therefore, as the union of a threshold graph and isolated vertices is a threshold graph, $\widetilde{S}$ is a threshold graph.

It is then possible to apply the polynomial $\complete$ function to build $\widetilde{S}_x$, which is a minimal threshold deletion of $G$.
\end{proof}

Once we know that all neighbours of a solution are also solutions, let us prove that the solution graph is strongly connected.

\begin{lem}
Let $S$ and $S^*$ be two minimal threshold deletions of a graph $G$.
There exists $x\in V$ such that $\vert \widetilde{S}_x\prox S^*\vert > \vert S \prox S^*\vert$.
\end{lem}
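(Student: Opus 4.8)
The statement parallels the strong-connectivity lemmas already proved for cographs (Lemma \ref{lem4}) and for $P_3$-free graphs, so I would follow the same template: take the canonical (threshold) vertex ordering $v_1^*,\ldots,v_n^*$ of the target solution $S^*$, locate the first edge $v_k^* v_j^*$ (with $j<k$) in the canonical \emph{edge} ordering of $S^*$ that does not belong to $S$, and then argue that after applying the neighbouring function to a well-chosen vertex — here I claim the right choice is $x := v_k^*$, the larger endpoint — the new solution $\widetilde S_x$ contains every edge of $S^*$ occurring up to and including $v_k^* v_j^*$, hence $\vert \widetilde S_x \prox S^*\vert \ge \vert S\prox S^*\vert + 1$.

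**Key steps.** First I would record what the edge ordering of $S^*$ looks like near $v_k^* v_j^*$: every edge appearing strictly before it is, by definition of our index, already in $S$, so $\{v_1^*,\ldots,v_{k-1}^*\}$ spans the same edges in $S$ and in $S^*$, and the prefix $v_1^*,\ldots,v_{k-1}^*$ is in fact a common threshold construction ordering of an induced piece of both. Note also that in the canonical ordering of $S^*$, once $v_k^*$ is inserted it is either isolated or universal in $S^*[v_1^*,\ldots,v_k^*]$; since $v_k^* v_j^* \in E(S^*)$ it must be \emph{universal} there, so $v_k^*$ is adjacent in $S^*$ to all of $v_1^*,\ldots,v_{k-1}^*$. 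Second, I would apply the neighbouring function with $x = v_k^*$: by construction $\widetilde E(x) = E_G(x)$ keeps \emph{all} $G$-edges at $x$ (in particular all of $v_k^* v_i^*$ for $i<k$, which are present since $S^*\subseteq G$), and for each $v\in N_G(x)$ the set $\widetilde E(v) = \{vw \mid w\in N_S(v)\}\cup\{vx\}$ keeps all $S$-edges at $v$ plus the edge to $x$. So the only edges that can be destroyed by the transition are edges of $S$ incident to a \emph{non}-neighbour of $x$ in $G$. Third — the crucial point — I must show no edge $v_s^* v_t^*$ of $S^*$ with $s,t \le k$ is destroyed: if both endpoints are among $v_1^*,\ldots,v_{k-1}^*$ then the edge lies in $S$ and survives because both endpoints are in $N_G(v_k^*)$ (using that $v_k^*$ is universal over that prefix in $S^*\subseteq G$, so each $v_i^*$, $i<k$, is a $G$-neighbour of $x$) and the edge is an $S$-edge at each endpoint, hence retained by rule 2; if the edge is incident to $v_k^*$ itself, it is retained by rule 1. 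Finally, the edge $v_k^* v_j^*$ is in $G$ and incident to $x=v_k^*$, so it is in $\widetilde E(x)\subseteq \widetilde S_x$. Collecting these, all edges of $S^*$ up to and including $v_k^* v_j^*$ lie in $\widetilde S_x$; the $\complete$ function only adds edges, so the proximity strictly increases, which is the claim.

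**Main obstacle.** The delicate step is the third one: making sure that the neighbouring function, which wholesale empties $\widetilde E(v)$ for every $v\notin N_G(x)$, does not delete an edge of $S^*$ that is needed for the prefix. This is exactly where one uses that $v_k^*$ is \emph{universal} (not isolated) in $S^*[v_1^*,\ldots,v_k^*]$: that forces all the relevant low-index vertices into $N_G(x)$, so the dangerous "emptying" rule never touches them. One subtlety to check is the degenerate possibility that $v_k^*$ is isolated at the step it is introduced in $S^*$ and only later acquires $v_j^*$ as a neighbour; but the canonical construction ordering of $S^*$ is fixed once and for all, so "the step $v_k^*$ is introduced" is unambiguous, and an isolated introduction is incompatible with $v_k^* v_j^*\in E(S^*)$ for $j<k$. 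I would also double-check the edge ordering convention (writing each edge as $v_{\max} v_{\min}$ and taking the lexicographically least resulting sequence) to confirm that an edge incident to $v_k^*$ of the form $v_k^* v_i^*$ with $i>j$ — or an edge between two vertices both $>\,$ some earlier index — indeed comes \emph{after} $v_k^* v_j^*$, so that it is harmless if such an edge is lost. Once these bookkeeping points are settled the proof is a routine case analysis mirroring the $P_3$-free case.
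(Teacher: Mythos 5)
Your proposal is correct and follows essentially the same route as the paper's proof: pick $x=v_k^*$, the larger endpoint of the first edge of the canonical edge ordering of $S^*$ missing from $S$, note that rule 1 retains all $G$-edges (hence all $S^*$-edges) at $v_k^*$, and use the fact that $v_k^*$ must be universal (not isolated) in $S^*[v_1^*,\ldots,v_k^*]$ to conclude that every earlier edge of the ordering joins two $G$-neighbours of $v_k^*$ and is therefore preserved. The bookkeeping points you flag (isolated vs.\ universal introduction, edges incident to $v_k^*$ with other second endpoints) are handled exactly as in the paper, so there is no gap.
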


\begin{proof}
Let $v_1^*,\ldots,v_n^*$ be the threshold construction ordering of $S^*$, and consider the associated edge canonical ordering.

Let $v_k^* v_i^*$ be the first edge of $S^*$ which does not appear in $S$.

By definition, the neighbours of $v_k^*$ in $\widetilde{S}_{v_k^*}$ are all the neighbours of $v_k^*$ in $G$.
Necessarily, we have the inclusion $N_{S^*}(v_k^*) \subseteq N_{\widetilde{S}_{v_k^*}}(v_k^*)$.
Hence all edge incident to $v_k^*$ in $S^*$ is also present in $\widetilde{S}_{v_k^*}$.

Moreover, if an edge $v_j^* v_i^*$ with $j<k$ is present in $S^*$, then it is also present in $S$ by definition of $v_k^* v_i^*$ as the first edge of $S^*$ which does not appear in $S$.
Since $v_k^* v_i^*$ is an edge of $S^*$, it means that $v_k^*$ is universal in $S^*[v_1^*,\ldots,v_k^*]$.
Therefore every edge $v_j^* v_i^*$ with $j<k$ appearing in $S^*$ is an edge joining two neighbours of $v_k^*$ in $G$: it is not removed by the transition operation, so it appears in $\widetilde{S}_{v_k^*}$.

Hence $\widetilde{S}_{v_k^*}$ satisfies $\vert \widetilde{S}_x\prox S^*\vert > \vert S \prox S^*\vert$.
\end{proof}

From these two lemmas, we deduce that minimal threshold deletions are Proximity Searchable, hence the following theorem.

\begin{thm}\label{thm: threshold deletions}
Minimal threshold deletions (resp. completions) of a graph $G$ can be enumerated in polynomial delay, and exponential space.
\end{thm}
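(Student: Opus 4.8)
The plan is to verify the three conditions required by the Proximity Search framework (Algorithm~\ref{alg2}) for the class of threshold graphs, exactly as was done for cographs and $P_3$-free graphs, and then invoke the framework to conclude. Since the two lemmas immediately preceding the statement already establish (i) that the neighbouring function $\neigh$ is well-defined (every $\widetilde{S}_x$ is again a minimal threshold deletion) and (ii) that the solution graph is strongly connected (from any $S$ one can always move to some neighbour with strictly larger proximity to any target $S^*$), the bulk of the work is already in place; the theorem is essentially a packaging statement.

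First I would recall the ingredients and check the remaining formal requirements. The proximity $S\prox S^*$ is the length of the longest prefix of the canonical edge ordering of $S^*$ contained in $S$; one must note $\vert S\prox S^*\vert = \vert S^*\vert$ forces $S=S^*$, which holds because a prefix of length $\vert S^*\vert$ is all of $E(S^*)$, and since both are minimal (inclusion-wise maximal) threshold deletions, $E(S^*)\subseteq E(S)$ with both maximal gives $S=S^*$. Next, $\neigh(S)$ has linear size (one neighbour $\widetilde{S}_x$ per vertex $x$), and each $\widetilde{S}_x$ is computed by building the explicit edge sets $\widetilde{E}(v)$ and then calling $\complete$; since threshold graphs are recognisable in polynomial time and, being sandwich-monotone \cite{heggernes2}, admit the greedy one-edge-at-a-time $\complete$ function described above, all of this runs in polynomial time. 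Finally, a first solution can be found in polynomial time (e.g.\ greedily complete the empty edge set, or apply $\complete$ to the edgeless graph on $V(G)$).

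Then I would simply apply Algorithm~\ref{alg2}: the two lemmas above show the solution graph on minimal threshold deletions is strongly connected, so a DFS-style traversal following $\neigh$ from any starting solution reaches every solution; the even/odd output rule of the $\enum$ procedure guarantees polynomial delay, at the cost of storing the set $\mathcal{S}$ of already-found solutions, which needs exponential space. This yields the statement for minimal threshold deletions. For the ``resp.\ completions'' part, I would invoke the remark made just before the neighbouring function: the class of threshold graphs is closed under complement, and as observed in Section~\ref{sec:split} and the introduction, enumerating minimal deletions of $G$ is equivalent to enumerating minimal completions of $\overline{G}$ (complement each solution); since complementation is polynomial and bijective, the delay and space bounds transfer verbatim.

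The main obstacle, such as it is, is not in this theorem but was dispatched in the two preceding lemmas — namely the correctness of the strongly-connected claim, which hinges on the careful case analysis showing that no edge of $S\setminus S^*$ precedes the first ``missing'' edge $v_k^* v_i^*$ in the canonical edge ordering of $S^*$, together with the fact that making $v_k^*$ universal preserves all earlier edges of $S^*$ because $v_k^*$ is universal in $S^*[v_1^*,\ldots,v_k^*]$. Here the only thing to be slightly careful about is that the canonical (lexicographically smallest) threshold construction ordering is genuinely well-defined and polynomially computable from a given threshold graph, so that the proximity and the neighbouring target are unambiguous; this follows from the linear-time recognition of threshold graphs and a straightforward greedy peeling of isolated/universal vertices. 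Modulo that routine verification, the theorem follows directly from the Proximity Search framework.
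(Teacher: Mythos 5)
Your proposal is correct and follows essentially the same route as the paper: the theorem is obtained by combining the two preceding lemmas (well-definedness of $\neigh$ and strong connectivity of the solution graph) with the sandwich-monotonicity-based polynomial $\complete$ function and the generic Proximity Search algorithm, and the completion case is handled by closure under complement exactly as you describe. The extra routine checks you spell out (proximity maximised only by $S^*$ itself, linear neighbourhood size, polynomial first solution, canonical ordering computable) are implicit in the paper's treatment and add no divergence.
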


\section{Extension problem}
\label{sec:extension}

\subsection{Maximal induced subgraphs}
\label{sec:ext-induced}

We are interested in one classical method to design some enumeration algorithm, called \emph{Binary Partition} \cite{ReadT75} or sometimes \emph{Flashlight search} .
This method relies on what we call the \emph{extension problem}.
Here we focus on the case of the extension problem applied to the enumeration of maximal induced $\Pi$-subgraphs: given a graph $G$, and two disjoint subsets $A$ and $B$ of its vertex set, does there exist a maximal induced $\Pi$-subgraph of $G$ containing $A$ and avoiding $B$?
As discussed in the following, solving this problem can lead to a polynomial-delay and polynomial space algorithm to enumerate, from a given arbitrary input graph $G$, all maximal induced subgraphs of $G$ satisfying property $\Pi$.

\paragraph{}
Recall that a graph property $\Pi$ is {hereditary} if for all $G$ satisfying $\Pi$, all induced subgraphs of $G$ also satisfy $\Pi$.
The hereditary property $\Pi$ is \emph{nontrivial} if it is verified by infinitely many graphs and has at least one obstruction, which is a graph $\Jpi$ not satisfying $\Pi$. 

We define the \emph{extension problem} for a hereditary graph property $\Pi$ as follows.
Given a graph $G$ (not satisfying $\Pi$ in general) and two disjoint subsets $A$ and $B$ of its vertex set $V$, decide if there exists a maximal induced subgraph $H$ of $G$ satisfying $\Pi$, containing $A$ and avoiding $B$.
In other words, we want that $A \subseteq V(H)$ and $B \cap V(H)=\varnothing$.
Note that we cannot remove $B$ from the vertex set before looking for a maximal induced $\Pi$-subgraph: this operation could lead us to a non-maximal induced $\Pi$-subgraph of $G$, which is not what we are looking for.

If it were possible to solve the extension problem for $\Pi$ in polynomial time, then we would be able to derive a polynomial-delay polynomial space algorithm to enumerate all maximal $\Pi$-subgraphs of a given graph $G$.
Indeed, if we denote by $v_1,\ldots ,v_n$ the vertices of $G$, we can start by checking the existence of a solution containing $v_1$, and the existence of a solution excluding $v_1$.
Iterating this process for all vertices of $G$ each time such a solution exists would lead us to find each solution exactly once, without considering the fail cases.
Such a procedure is referred to as Binary Partition or Flashlight Search.

The \emph{node-deletion problem} consists in determining the \emph{minimum} number of nodes which must be deleted from a graph such that the resulting subgraph satisfies property $\Pi$.
It has been shown by Lewis and Yannakakis in \cite{yannakakis} that for all nontrivial hereditary property $\Pi$, the node-deletion problem is NP-complete, provided that testing for $\Pi$ can be performed in polynomial time.
We will show here that the same can be said about the extension problem.

\begin{thm}\label{extension}
The extension problem for any nontrivial hereditary graph property $\Pi$ is NP-hard. If moreover $\Pi$ can be tested in polynomial time, then the extension problem is NP-complete.
\end{thm}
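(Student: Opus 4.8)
The plan is to follow the strategy of Lewis and Yannakakis \cite{yannakakis} for the node-deletion problem, adapted to the extension problem. Membership in NP (assuming $\Pi$ is polynomial-time testable) is immediate: a certificate for a ``yes'' instance is the vertex set $V(H)$ of a candidate solution, and one checks in polynomial time that $A\subseteq V(H)\subseteq V\setminus B$, that $G[V(H)]$ satisfies $\Pi$, and that $H$ is maximal, by testing for each of the at most $n$ vertices $v\notin V(H)$ that $G[V(H)\cup\{v\}]$ violates $\Pi$. Hence the whole content lies in the NP-hardness, which requires no testability assumption.

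First I would put the problem in a convenient normal form. The extension problem for $(\Pi,G,A,B)$ coincides with the one for $(\overline{\Pi},\overline{G},A,B)$, because $G[X]$ satisfies $\Pi$ iff $\overline{G}[X]$ satisfies $\overline{\Pi}$ and maximality is preserved under complementation; since $\overline{\Pi}$ is again nontrivial and hereditary, we may replace $\Pi$ by $\overline{\Pi}$ whenever convenient. Moreover, a nontrivial hereditary $\Pi$ must contain every edgeless graph or every complete graph: if $\overline{K_a}\notin\Pi$ and $K_b\notin\Pi$ for some $a,b$, then by heredity every graph of $\Pi$ has independence number less than $a$ and clique number less than $b$, hence fewer than $R(a,b)$ vertices by Ramsey's theorem, so $\Pi$ would be satisfied by finitely many graphs only, contradicting nontriviality. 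Thus, after possibly complementing, we may assume $\overline{K_n}\in\Pi$ for all $n$, and we fix an inclusion-minimal obstruction $\Jpi$, with vertex set $\{z_1,\dots,z_t\}$; then $t\ge 2$ and, since $\overline{K_t}\in\Pi$, the graph $\Jpi$ has at least one edge.

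The reduction would then go from $3$-SAT. For each variable $x_i$ one builds a choice gadget consisting of a fixed induced copy of $\Jpi$ with two suitably chosen vertices $z_1,z_2$ removed, with all its vertices placed in $A$, together with two new vertices $t_i,f_i$, where $t_i$ is attached to the copy exactly as $z_1$ is attached in $\Jpi$, $f_i$ exactly as $z_2$, and $t_i$ is adjacent to $f_i$ iff $z_1$ is adjacent to $z_2$ in $\Jpi$: the copy plus $t_i$ is isomorphic to $\Jpi-z_2\in\Pi$, the copy plus $f_i$ to $\Jpi-z_1\in\Pi$, but the copy plus $\{t_i,f_i\}$ is isomorphic to $\Jpi\notin\Pi$, so the obstruction forbids keeping both $t_i$ and $f_i$ while maximality forces keeping at least one, and a solution must select exactly one, read as a truth value. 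For each clause $C_j$ one adds a vertex $c_j$ to $B$, attached to each literal gadget of $C_j$ so that $c_j$, the corresponding literal vertex and part of the corresponding anchor induce a copy of $\Jpi$ exactly when that literal is set to true; then adding $c_j$ to a candidate $V(H)$ recreates an obstruction --- so $c_j$ is ``blocked'' --- precisely when $C_j$ has a true literal, and a maximal induced $\Pi$-subgraph containing $A$ and avoiding $B$ exists iff the formula is satisfiable. For instance, when $\Pi$ is the class of $P_3$-free graphs and $\Jpi=P_3$, the anchor is a single vertex $p_i$ (the middle of the path), $t_i$ and $f_i$ are its two non-adjacent neighbours, and $c_j$ is joined to the literal vertices so that it completes an induced $P_3$ with some anchor exactly when one of its literals is true.

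The hard part will be to make this construction correct for an \emph{arbitrary} minimal obstruction $\Jpi$. The main difficulty is that $\Pi$ is not assumed to be closed under disjoint union, so a graph assembled from many pairwise non-adjacent gadgets need not admit the intended solution $H$; one has to argue, exactly as in Lewis and Yannakakis, that the anchors together with the chosen literal vertices still induce a graph satisfying $\Pi$, which in general forces a careful choice of the vertices $z_1,z_2$ and of the attachment points, possibly some padding, and a split into cases according to the shape of $\Jpi$ (notably whether $\Jpi$ is complete --- so that $\Pi$ is essentially a ``$K_t$-free'' property --- or has a non-edge, and whether it is connected). One must also verify that the $H$ produced is maximal in the whole constructed graph and that no unintended induced obstruction arises between two gadgets or between a gadget and a vertex added during completion; this is handled by making the gadgets mutually inert and checking the finitely many relevant local configurations. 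Granting the structural lemmas of Lewis and Yannakakis on nontrivial hereditary properties, these checks are routine, but they form the bulk of the proof.
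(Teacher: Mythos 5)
There is a genuine gap, and you have in fact pointed at it yourself: you write that the whole difficulty is that $\Pi$ need not be closed under disjoint union, and then you defer its resolution to unstated ``structural lemmas of Lewis and Yannakakis'', calling the remaining checks routine. They are not routine, and the specific device that resolves the difficulty is missing from your proposal. Choosing $\Jpi$ merely \emph{inclusion}-minimal is not enough: with a graph assembled from one gadget per variable and per clause, nothing prevents the union of the ``intended'' kept vertices from containing some other obstruction spread across several gadgets (think of a hereditary $\Pi$ that bounds a global quantity, e.g.\ the number of edges or of non-trivial components), and inclusion-minimality of $\Jpi$ gives you no handle on such cross-gadget obstructions. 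The paper's proof hinges on a different minimality: one defines the well-founded total order $\mu$ (for each connected piece, the lexicographically smallest sorted sequence of component sizes after deleting a best cut vertex $c$), takes $\Jpi$ to be $\mu$-minimal among all graphs not in $\Pi$, and designs the construction so that the intended solution $H$ satisfies $\mu(H)<\mu(\Jpi)$ --- each copy of the piece $J_0$ loses a vertex, shrinking the component-size sequence --- whence $H\in\Pi$ \emph{automatically}, with no assumption of closure under disjoint union. That argument is the heart of the proof, and your sketch contains no substitute for it.

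Beyond that central point, your route also differs in a way that creates extra, unaddressed work: you reduce from $3$-SAT with truth-assignment gadgets, whereas the paper reduces from the extension version of maximal independent set (does $G$ have a maximal stable set avoiding $B$?), NP-complete by \cite{boros}. That source problem already carries the ``avoid $B$'' constraint, so the target sets are simply $A=$ all gadget vertices and $B=$ the given $B$, and both directions of the equivalence (including maximality of the produced subgraph, and conversely extracting a \emph{maximal} stable set from a maximal $\Pi$-subgraph) are short. In your scheme the clause vertices in $B$ must be blocked exactly when the clause is satisfied, maximality must force each variable gadget to keep at least one of $t_i,f_i$, and all of this must be verified against an arbitrary $\Jpi$ and against completions added for maximality; none of these verifications is supplied, and without the $\mu$-minimality mechanism it is unclear they can be. So the proposal correctly identifies the strategy's provenance and the NP-membership part, but the NP-hardness argument is not complete as written.
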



\paragraph{} 
From now on, with a slight abuse of notation we denote by \emph{the class $\Pi$} the class of graphs satisfying property $\Pi$.
First of all, the class $\Pi$ contains all the stable sets or all the cliques.
Indeed, because $\Pi$ is nontrivial there exist graphs in $\Pi$ that have arbitrarily many vertices.
Let $k \in \NN$.
Then by Ramsey's theorem, every graph in $\Pi$ that is large enough (at least the so-called Ramsey number $R(k,k)$) contains either a clique or a stable set of size $k$.
Since $\Pi$ is a hereditary property, it implies that this clique or stable set of size $k$ belongs to $\Pi$.
Up to considering the complement class $\overline{\Pi}$ (which remains nontrivial and hereditary), we can assume that $\Pi$ contains all the stable sets.

\paragraph{}
The first step is to define a total ordering on the set of all graphs, essentially the same as the one given in \cite{yannakakis}.
This order will be useful in the proof of Theorem \ref{extension}, to ensure that the considered graph satisfies property $\Pi$.

For any connected graph $G$ and any vertex $c \in V(G)$, let $\lambda^c(G)$ be the non-increasing sequence of the sizes of the connected components of $G-c$.
If $V(G)=\{c\}$, then we take $\lambda^c(G)=(0)$.
Let then $\lambda(G)$ be the minimum for the lexicographic order of the $\lambda^c(G)$, that is, $\lambda(G)=\min\limits_{lex} \left\lbrace \lambda^c(G) \mid c\in V(G) \right\rbrace$.
Let $c(G) \in V(G)$ be the vertex which minimises $\lambda^c(G)$.
If the minimum is reached for several vertices, take any of them as $c(G)$.

For any graph $G$ with connected components $G_1,\ldots,G_l$, let $\mu(G)$ be the non-increasing sequence of the $\lambda(G_i)$ according to the lexicographic order.
For convenience, the connected components $G_1,\ldots,G_l$ of $G$ will now be ordered to have $\mu(G) = (\lambda(G_1),\ldots,\lambda(G_l))$.
The function $\mu$ defines a total and well-founded order (all elements are pairwise comparable, there is no infinite decreasing sequence for $\mu$) on the set of all graphs.

Figure \ref{ordre} illustrates this order on three connected graphs $G_1$, $G_2$, and $G_3$, where the vertex $c$ minimising $\lambda$ is identified (except for $G_2$ where any vertex can be taken).

\begin{figure}[!ht]
	\centering
	\includegraphics{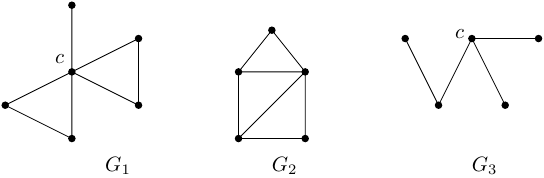}
	\caption{In this example, we have $\mu(G_1)=((2,2,1))$ whereas  $\mu(G_2)=((4))$, and $\mu(G_3)=((2,1,1))$. Therefore, $\mu(G_3) \leq \mu(G_1) \leq \mu(G_2)$, and $G_3$ is the smallest of the three graphs for $\mu$.}
	\label{ordre}
\end{figure}

\begin{proof}[Proof of Theorem \ref{extension}]
First, if $\Pi$ can be tested in polynomial time, the extension problem is in NP.
Indeed, given a subset $S$ of vertices of the input graph $G$, one can easily check in polynomial time whether $G[S]$ satisfy $\Pi$, and whether $S$ contains $A$ and avoids $B$.
Moreover, 
the maximality of $G[S]$ can be tested in polynomial time because $\Pi$ is hereditary.

As stated earlier, we will assume that the class $\Pi$ contains all the stable sets.
The stable sets are minimum for $\mu$.

The order defined by $\mu$ is well-founded, so every nonempty subset of the set of all graphs has a smallest element for $\mu$.
Hence there exists a graph $\Jpi \notin \Pi$ which is minimum for $\mu$, that is to say $\mu(\Jpi) = \min\limits_{lex} \{ \mu(H)\mid H \notin \Pi \}$.
The graph $\Jpi$ has at least two vertices and at least one edge.

Denote by $J_1,\ldots,J_l$ the connected components of $\Jpi$ (such that $\mu(\Jpi)=(\lambda(J_1),\ldots,\lambda(J_l))$).
Let $J^*_0$ be the largest connected component of $J_1 - c_1(J_1)$, we define as $J_0$ the subgraph of $\Jpi$ induced by $V(J^*_0)\cup\{c_1(J_1)\}$.
Let also $J_1' = \Jpi[V(J_1)\setminus V(J_0^*)]$.
The graphs $J_0$ and $J_1'$ are connected, and $J_0$ has at least one edge.
Let then $d$ be any vertex of $J_0$ different from $c_1(J_1)$.

Before starting the reduction, let us have a look at what happens to the obstruction $\Jpi$. 
An illustration of this process applied to a graph $\Jpi$ is given in Figure \ref{obstr}.
This graph verifies $\mu(\Jpi)=((4,2),(3))$. 

\begin{figure}[!ht]
	\centering
	\includegraphics{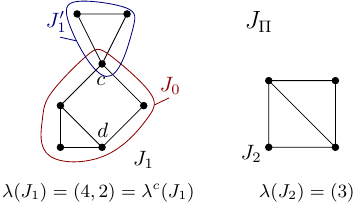}
	\caption{An obstruction $\Jpi$ and its components $J_1$ and $J_2$.}
	\label{obstr}
\end{figure}

\paragraph{}
The reduction is from the maximal stable set avoiding a set $B\subseteq V(G)$, whose NP-completeness is proven in \cite[Proposition 2]{boros}.
Let now $G$ be a graph, and let $B$ be a subset of its vertices.

\paragraph{}
We will build the graph $G'$, using a technique similar to \cite{yannakakis}, as follows.
For each vertex $u$ of $G$, attach a copy of $J_1'$ to $u$ by identifying $c_1(J_1)$ with $u$.

Replace every edge of $G$ by a copy of $J_0$, identifying $c_1(J_1)$ with one endpoint and $d$ with the other, in any order.
Finally, add $J_2,\ldots,J_l$ as new connected components.
Call $G'$ the graph obtained this way.
The graph $G'$ contains a copy of $\Jpi$, therefore it does not verify the hereditary property $\Pi$.
Figure \ref{transf} illustrates the transformation of a graph $G$ into $G'$, using the obstruction $\Jpi$ of figure \ref{obstr}.
Call $A$ the set of all vertices that are in $G'$ but were not in $G$.

\begin{figure}[!ht]
	\centering
	\includegraphics[page=1, width=0.4\textwidth]{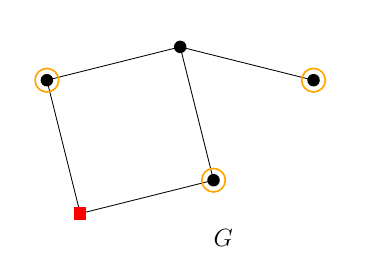}
	\hfill
	\includegraphics[page=2, width=0.4\textwidth]{transformation.pdf}
	\caption{$G$ transforms into $G'$. The set $B$ consists here in the only square vertex, and the set $A$ in the smaller vertices. The circled vertices form a maximal stable set of $G$ not intersecting $B$.}
	\label{transf}
\end{figure}

We will show that finding a maximal stable set of $G$ avoiding $B$, or finding a maximal induced $\Pi$-subgraph of $G'$ which contains $A$ and avoids $B$ are two equivalent problems.

\paragraph{1)}
First, suppose that there exists a maximal stable set $S$ of $G$ which does not intersect $B$.
We will see that the graph $H:=G'[S\cup A]$ verifies $\mu(H)<\mu(\Jpi)$.
As $S$ is a maximal stable set of $G$, each edge in $G$ has at least one endpoint in $V\setminus S$.
Hence removing from $G'$ all the vertices of $V(G)\setminus S$, destroys a vertex identified to $c_1(J_1)$ or $d$ in each copy of $J_1$.
For each vertex $s\in S$, let us  further observe the connected component $H_s$ of $H$ containing $s$. It is composed of one copy of $J_1'$ and $d_G(s)$ "truncated" copies of $J_0$, all glued together by identifying either $c_1(J_1)$ or $d$ into $s$ (by "truncated" copy of $J_0$, we mean either $J_0 - c_1(J_1)$ or the connected component of $J_0 - d$ that contains  $c_1(J_1)$). We have $\lambda(H_s)\leq\lambda^s(H_s)<_{\text{lex}}\lambda(J_1)$ because the truncated copies of $J_0$ are smaller than $J_0$ itself. Let us show now that $\mu(H)< \mu(J_\Pi)$. Observe that the connected components of $H$ are precisely $\{H_s : s\in S\} \cup \{J_2,...,J_l\}$. We assumed that $\lambda(J_1),...,\lambda(J_l)$ is a non increasing sequence, so $\lambda(J_i) \leq \lambda(J_1)$ for all $2\leq i\leq l$. If $\lambda(J_2)<\lambda(J_1)$ (if the inequality is strict), then since $\lambda(H_s)<\lambda(J_1)$ for all $s\in S$ we conclude that $\lambda(C) < \lambda (J_1)$ for any connected component $C$ of $H$. So we have $\max\{\lambda(C) : C \text{ is a connected component of } H\} < \lambda(J_1)$ and the sequence $\mu(H)$ is lexicographically strictly smaller than $\mu(J_\Pi)$ since the first element of $\mu(H)$ is strictly smaller than the first element of $\mu(J_\Pi)$. Now if $\mu(H)$ was starting with a plateau, i.e. if there exists $j\geq 2$ such that $\lambda(J_1)=\lambda(J_2)=\cdots=\lambda(J_j)$ (choose the largest such index $j$), then
 we will conclude by noticing that the sequence $\mu(H)$ 
 starts with the plateau $\lambda(J_2)=\cdots=\lambda(J_j)$, which is strictly shorter by one, and then all other values appearing in $\mu(H)$ are strictly smaller than $\lambda(J_1)$.
Indeed, since for all $s\in S$, $\lambda(H_s)<\lambda(J_1)$ and since the only other connected components of $H$ are copies of $J_2,...,J_l$, the only connected components $C$ of $H$ for which $\lambda(C)=\lambda(J_1)$ are precisely the copies of $J_2,...,J_j$. So only the first $j-1$ elements of the sequence $\mu(H)$ will have value $\lambda(J_1)$ and then, the sequence  $\mu(H)$ will be lexicographically strictly smaller than $\mu(J_\Pi)$.
In either case, we obtain that $\mu(H)<\mu(J_\Pi)$.

As $\Jpi$ is the smallest obstruction for $\Pi$, and since by definition of $\mu$, $\mu(H)$ is larger than $\mu(K)$ for any subgraph $K$ of $H$, $H$  does not contain any obstruction as an induced subgraph and $H\in \Pi$.
Therefore $H$ is an induced $\Pi$-subgraph of $G'$ that contains $A$ and does not intersect $B$.


\paragraph{2)}
Conversely, assume that there exists a maximal induced $\Pi$-subgraph $H$ of $G'$ containing $A$ and avoiding $B$.
Then $\Jpi$ is not an induced subgraph of $H$.
But every edge $uv$ of $G$ "induces" a copy of $J_1$ in $G'$ (meaning, if you take both $u$, $v$, and a bunch of vertices of $A$) which together with the copies of $J_2,...,J_l$ would form a subgraph isomorphic to $J_\Pi$. Furthermore, $u$ and $v$ are the only vertices of that copy of $J_\Pi$ that do not belong to $A$. Hence, since $J_\Pi$ is an obstruction for $\Pi$ and since $H$ must contain all the vertices of $A$, $H$ does not contain $u$ or does not contain $v$. So $V(H)\cap V(G)$ is a stable set of $G$ which does not intersect $B$.

\smallskip

Up to now we proved :

\begin{itemize}
  \item  In \textbf{1)}  that if $S$ is a maximal stable set of $G$ which does not intersect $B$, then $H:=G'[S\cup A]$ is a $\Pi$-subgraph of $G'$ which contains $A$ and does not intersect $B$.
  \item In \textbf{2)}  that if $H$ is a maximal $\Pi$-subgraph of $G'$ which contains $A$ and does not intersect $B$, then $S:=V(H)\cap V(G)$ is a  stable set of $G$ which does not intersect $B$.
\end{itemize}

It remains to show that what we obtain in both cases is also maximal.

Let us prove that if $S$ is a maximal stable set of $G$ which does not intersect $B$, then $H:=G'[S\cup A]$ is a \textbf{maximal}  $\Pi$-subgraph of $G'$ which contains $A$ and does not intersect $B$.
 Assume it is not maximal and let $H'\supsetneq H$ be a maximal $\Pi$-subgraph of $G'$ which contains $A$ and does not contain $B$. By \textbf{2} we know that $S':=V(H')\cap V(G)$ would be a stable set of $G$ which does not intersect $B$. But then observe that  $S=V(H) \cap V(G) \subseteq V(H') \cap V(G) = S'$. Moreover there exists $v\in V(H')\setminus V(H)$, consequently $v\notin A$ so $v\in V(G)$. Hence $v\in S'\setminus S$ so 
 $S'$ would be a stable set of $G$ that avoids $B$ strictly containing $S$, contradicting the maximality of $S$.

Now let us prove that if $H$ is a maximal $\Pi$-subgraph of $G'$ which contains $A$ and does not intersect $B$, then $S:=V(H)\cap V(G)$ is a \textbf{maximal} stable set of $G$ which does not intersect $B$. Assume that $S$ is not maximal, and let $S'\supsetneq S$ be a maximal stable set of $G$ avoiding $B$ which contains $S$. By \textbf{1)} $H':=G'[S'\cup A]$ is a $\Pi$-subgraph of $G'$ which contains $A$ and does not intersect $B$. But since $H = G'[S\cup A]$, it is contained in $H'$ which would contradict the maximality of $H$ unless $S\cup A=S'\cup A$ which implies $S'=S$ because $A\cap S'=\emptyset$.


\end{proof}

Hence, for a general class $\Pi$, it is not possible to enumerate all maximal induced $\Pi$-subgraphs of a graph $G$ in polynomial delay using the very simple algorithm derived from the extension problem.
This, however, does not mean that Flashlight search cannot be efficient at all, since sets $A$ and $B$ could in fact be assumed to have certain specific properties depending on $\Pi$, and which could make the extension problem easy to solve in special cases (see for instance \cite{ConteGPU19,KanteLMN14}).

\subsection{Maximal edge-subgraphs}
\label{sec:ext-edge}
The extension problem in its edge version is stated as follows: given a graph property $\Pi$, a graph $G=(V,E)$, and two disjoint subsets $A$ and $B$ of $E$, does there exist a minimal $\Pi$-deletion of $G$ containing all edges of $A$ and no edge of $B$?
As for the vertex version, it does not suffice in general to look for minimal $\Pi$-deletions of $G-B$, because the $\Pi$-deletions obtained this way may not be maximal $\Pi$-subgraphs of $G$.
In the remainder of this section, to draw a parallel between the two versions of the extension problem, minimal completions will be referred to as ``maximal $\Pi$-subgraphs''.

Contrary to the extension problem studied in the previous subsection, this ``edge version'' of the extension problem is not NP-complete for all nontrivial hereditary $\Pi$, as discussed in the following.
Thus it would be of interest to determine for which classes of graphs the extension problem is polynomial, and for which ones it is NP-complete.

A graph property $\Pi$ is said to be \emph{monotone} if it is closed under edge removal.
In other words, $\Pi$ is monotone if for any graph $G$ satisfying $\Pi$ and any edge $e$ of $G$, $G-e$ also satisfies $\Pi$.

\subsubsection*{Forests}

If $\Pi$ denotes the class of forests, which is a monotone property, the extension problem asks if there exists a maximal spanning forest of $G$ containing $A$ and avoiding $B$.
Assume first that $G$ is connected.
In this particular case, if removing $B$ first disconnects $G$, then the answer is no: all spanning trees of $G$ must contain at least one element from $B$.
Else, $B$ can be removed first.
The set $A$, if it induces a forest, can then be extended in $G-B$ to a spanning tree of $G$.
In the case where $G$ is not connected, a (maximal) spanning forest of $G$ is the union of spanning trees of all connected components of $G$.
Therefore, the extension problem for forests is polynomial.

\bigskip

Nevertheless, the extension problem is not polynomial for all classes of graphs.

\subsubsection*{$P_k$-free graphs}

It is proven in \cite{casel}, among other results, that the extension problem with $A=\varnothing$ is NP-complete in the case of maximal matchings.
Since maximal $P_3$-free graphs are exactly maximal matchings in triangle-free graphs, this result implies that the edge extension problem for $P_3$-free graphs is NP-complete, even in triangle-free graphs and with $A=\varnothing$.
In the sequel, we extend this result, proving the following theorem.

\begin{thm}\label{pk_free}
The edge extension problem for $P_k$-free graphs is NP-hard for all $k\geq 3$.
Moreover, it is NP-complete when $k$ equals $3$ or $4$.
\end{thm}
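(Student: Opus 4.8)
The plan is to prove NP-hardness of the edge extension problem for $P_k$-free graphs by reducing from a suitable NP-hard problem, treating the small cases $k=3,4$ together with the general case $k\geq 5$, and to note membership in NP for $k=3,4$ separately. For $k=3$, as the excerpt already observes, maximal $P_3$-free edge-subgraphs of a triangle-free graph are exactly maximal matchings, so the NP-completeness of the maximal matching extension problem from \cite{casel} already gives the result with $A=\varnothing$; the first step is simply to recall this and record that it also settles the base case. For the general statement, I would reduce from $3$-SAT or from a restricted satisfiability / independent-set-type problem. Given a formula (or graph) instance, build a graph $G$ together with forced sets $A$ (edges that must survive) and $B$ (edges that must be deleted) so that the ``local gadgets'' encoding variables and clauses each contain a long induced path unless a deletion consistent with a satisfying assignment is chosen. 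The key design constraint is that $A$ pins down most of each gadget, leaving only a binary choice per variable, while $B$ forbids the ``cheap'' way of killing a $P_k$ so that the remaining choices are genuinely the truth values.

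The core of the argument is the gadget construction, and this is where I expect the main obstacle to lie. One has to engineer, for each $k$, a graph fragment in which: (i) deleting a certain distinguished edge corresponds to setting a variable true, deleting a different one to false; (ii) after the forced edges of $A$ are kept and the forbidden edges of $B$ are removed, the only induced $P_k$'s that can remain are those witnessing an unsatisfied clause; and (iii) no spurious long induced path is created by the interaction of several gadgets. A clean way to get uniformity in $k$ is to start from the $P_3$ gadget (i.e. the matching reduction) and ``pad'' each edge by subdividing it or attaching a pendant path of length $k-3$, so that a $P_3$ in the original becomes a $P_k$ in the padded graph, while checking that padding does not introduce new induced $P_k$'s elsewhere — this last verification (that the added paths do not combine with the rest of $G$ to form an induced $P_k$) is the delicate bookkeeping step, since induced paths are fragile under adding vertices but the gadgets must still interact correctly.

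Once the reduction is in place, correctness follows in two directions: a satisfying assignment yields a deletion that is $P_k$-free, contains $A$, avoids $B$, and is maximal (extend greedily, using that $P_k$-freeness is hereditary so maximality is efficiently checkable); conversely any maximal $P_k$-free deletion containing $A$ and avoiding $B$ must, gadget by gadget, encode a consistent assignment, and the absence of any surviving induced $P_k$ forces every clause to be satisfied. Finally, for $k\in\{3,4\}$, membership in NP is immediate: $P_k$-freeness is testable in polynomial time, and since the property is hereditary the maximality of a candidate deletion is also polynomial-time checkable, so a guessed edge-subgraph can be verified efficiently; combined with the hardness this yields NP-completeness for $k=3,4$.
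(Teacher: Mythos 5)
Your outline stops exactly where the proof has to happen: you say the gadget construction is ``where I expect the main obstacle to lie'' and then never give it, so there is no actual reduction for $k=4$ or $k\geq 5$. The paper's proof supplies precisely this missing piece: reduce from 3-SAT by creating a vertex $c_j$ per clause with a pendant vertex $c_j'$, a vertex per literal with the edge $x_i\overline{x_i}$ between complementary literals, and literal--clause edges; then set $A=\{x_i\overline{x_i}\}$ and $B=\{c_jc_j'\}$. Maximality forces each $c_j$ to be the end of an induced $P_3$ (so each clause keeps a literal edge), and $P_4$-freeness forbids keeping clause edges at both $x_i$ and $\overline{x_i}$, which is exactly a satisfying assignment. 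Your proposed uniformisation --- pad the $P_3$/matching gadget of the $k=3$ case by subdividing or attaching pendant paths --- is not worked out and is not obviously sound, since the $k=3$ hardness you quote is for extension with $A=\varnothing$ and subdividing a matching instance changes which edge-subgraphs are maximal; the paper instead gets $k>4$ by subdividing only the variable edges $x_i\overline{x_i}$ of its own $k=4$ gadget into paths of length $k-3$ placed in $A$.

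There is also a genuine error in your NP-membership argument. You claim that ``since the property is hereditary the maximality of a candidate deletion is also polynomial-time checkable.'' Heredity is closure under vertex removal; it makes maximality easy for \emph{induced} subgraphs, but for \emph{edge}-subgraphs of a non-monotone class, failing every single-edge addition does not certify maximality (a larger $P_k$-free edge-subgraph may exist even though no single edge can be added, as the paper's trivially-perfect example illustrates for a related class). Indeed, if your argument were valid it would give NP-membership for every $k$, whereas the theorem claims NP-completeness only for $k\in\{3,4\}$. The paper's membership proof for $k=4$ instead completes the candidate to a maximal solution using the polynomial cograph sandwich algorithm of \cite{sandwich} and checks equality; some such class-specific maximality test is needed, and it is exactly what is unavailable for general $k$.
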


\begin{proof}

For $k=3$, the result follows from \cite{casel} and the previous discussion.
Now, let us prove it for $k=4$.

For $G=(V,E)$ a graph, and a set $S\subseteq E$ of edges of $G$, determining if $S$ is the edge set of a maximal $P_4$-free subgraph of $G$ can be done in polynomial time: it suffices to complete $S$ into a maximal solution $S'$ (iterating the polynomial sandwich algorithm from \cite{sandwich}) and to check if $S=S'$.
The problem is in NP.

\paragraph{}
The reduction is from 3-SAT.
Begin with a 3-SAT formula, with clauses $C_1\ldots,C_m$, and variables $X_1,\ldots,X_n$.
Consider the graph $G$ built from the formula as follows.

For each clause $C_j$, build a vertex $c_j$, and for each variable $X_i$, build two vertices $x_i$ and $\overline{x_i}$ corresponding to the associated literals, linked by an edge $x_i \overline{x_i}$.
Now, add an edge between $x_i$ (or $\overline{x_i}$) and $c_j$ if the corresponding literal appears in clause $C_j$.
At this step, each vertex $c_j$ is of degree 3.
Finally, for all $j$, add a pending vertex $c_j'$ to $c_j$.
Call the resulting graph $G$.
An illustration of such a construction is given in Figure \ref{reductionAretes}.

\begin{figure}[!ht]
	\centering
	\includegraphics{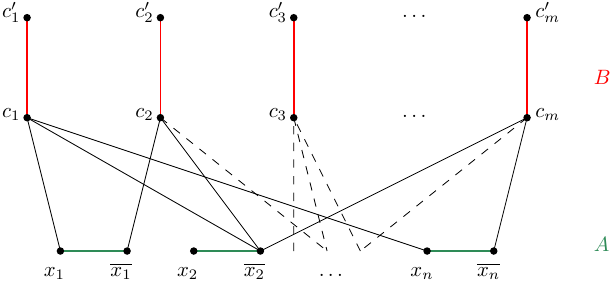}
	\caption{The graph $G$ obtained from a 3-SAT formula.}
	\label{reductionAretes}
\end{figure}

Define the two edge sets $A$ and $B$ by $A=\{x_i \overline{x_i}\mid 1\leq i \leq n\}$ and $B=\{c_j c_j'\mid 1\leq j\leq m\}$.
These will be the sets $A$ and $B$ considered for the extension problem.

Suppose that there exists a maximal $P_4$-free subgraph $H$ of $G$, containing $A$ and avoiding $B$.
Since $H$ is a maximal $P_4$-free subgraph of $G$, adding to $H$ an edge of $B$ produces an induced $P_4$.
That is to say, each vertex $c_j$ is the endpoint of an induced $P_3$.
Thus, there exists $i$ such that $c_j x_i \in E(H)$ or $c_j \overline{x_i} \in E(H)$.
Without loss of generality, suppose that $c_j x_i \in E(H)$ (so $c_j \overline{x_i} \notin E(H)$).
In this case, there is no $k$ such that $c_k \overline{x_i} \in E(H)$, otherwise $c_j x_i \overline{x_i} c_k$ is an induced $P_4$ in $H$.

For all $1\leq i\leq n$, at most one of $x_i$, $\overline{x_i}$ has neighbours among $\{c_1,\ldots,c_m\}$.
Assigning the value $1$ to the corresponding literal satisfies the formula.

Conversely, if the formula is satisfiable, there exists a maximal $P_4$-free subgraph $H$ of $G$ containing $A$ and avoiding $B$.
It suffices to keep only edges incident to vertices corresponding to literals assigned the value $1$.
To avoid creating induced $P_4$, keep only one edge incident to each vertex $c_j$.
This way, each vertex $c_j$ is the endpoint of an induced $P_3$, and no edge of $B$ can be added.

\paragraph{}
Consequently, the edge extension problem for $P_4$-free graphs is NP-complete, even in triangle-free graphs because the graph $G$ we built is triangle-free.

It is easy to adapt the previous reduction to show the NP-hardness of the problem for $P_k$-free graphs, with $k>4$.
It suffices to subdivide each edge $x_i \overline{x_i}$ into a path of length $k-3$ whose edges are all put in the set $A$.
\end{proof}

\subsubsection*{Graphs without cycles of length at most $k$}

Another class of graphs that is worth studying is the class of graphs without cycles of length at most $k$, for a given $k\geq 3$, also known as graphs of girth at least $k+1$ (the \emph{girth} of a graph is the length of its shortest cycle).
This is a monotone property.
For this class of graph, it has been shown by Yannakakis in \cite{yannakakis2} that the \emph{edge-deletion problem} is NP-complete.
In this case, the edge-deletion problem consists in finding a set of edges of minimum cardinality whose removal results in a graph without cycles of length at most $k$.
In the same fashion, we prove that the edge extension problem for graphs without cycles of length at most $k$, that is to say, determining if there exists a maximal edge subgraph without cycles of length at most $k$, containing $A$ and avoiding $B$, is also NP-complete.

\begin{thm}\label{ck_free}
The edge extension problem for graphs without cycles of length at most $k$ is NP-complete for all $k\geq 3$.
\end{thm}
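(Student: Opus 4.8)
The plan is to mirror the proof of Theorem~\ref{pk_free}: first show membership in NP, then reduce from $3$-SAT. For membership, fix $k$ and observe that ``having no cycle of length at most $k$'' is testable in time $O(n^k)$ by enumerating all short cycles. Given a candidate edge set $S\subseteq E$, I would check in polynomial time that $(V,S)$ has girth more than $k$, that $A\subseteq S$ and $B\cap S=\varnothing$, and that $S$ is \emph{maximal}: for each $e\in E\setminus S$, test whether $(V,S\cup\{e\})$ contains a cycle of length at most $k$, equivalently whether the endpoints of $e$ are joined in $(V,S)$ by a path of length at most $k-1$. If all these tests pass, $S$ is a valid witness, so the problem lies in NP.

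For NP-hardness I would reduce from $3$-SAT, in the spirit of Yannakakis's edge-deletion reduction \cite{yannakakis2} and of Theorem~\ref{pk_free}. Starting from a formula with variables $X_1,\dots,X_n$ and clauses $C_1,\dots,C_m$, I would build $(G,A,B)$ so that: the forced set $A$ pins down a skeleton in which every path has length more than $k$ (so $A$ alone never forces a short cycle), realised by inserting ``spacer'' paths of length $k-2$ exactly as in the $P_k$-free construction; for each variable $X_i$ there is a \emph{variable gadget} containing two distinguished literal-edges $e_i$ and $\overline{e_i}$, arranged together with $A$-edges so that $e_i$ and $\overline{e_i}$ lie on one common cycle of length at most $k$, whence any $C_{\le k}$-free subgraph can keep at most one of them, encoding a consistent assignment; and for each clause $C_j$ there is one edge $\beta_j\in B$ whose endpoints $p_j,q_j$ are linked, through the gadget of each literal occurring in $C_j$, by a $p_j$--$q_j$ path of length exactly $k-1$, a path that survives in a subgraph $H$ if and only if the corresponding literal is ``true'' in $H$, with no other short $p_j$--$q_j$ path available.

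The correctness proof would then follow the pattern of Theorem~\ref{pk_free}. If $H$ is a maximal $C_{\le k}$-free subgraph of $G$ containing $A$ and avoiding $B$, then for each $j$ the fact that $\beta_j\notin E(H)$ together with maximality forces a $p_j$--$q_j$ path of length at most $k-1$ in $H$; by construction such a path must be one of the literal-paths of $C_j$, so $C_j$ is satisfied, and the $A$-edges and the variable-gadget cycles make the induced assignment consistent, hence $\phi$ is satisfiable. Conversely, from a satisfying assignment I would keep the literal-edges of the true literals together with the $A$-skeleton; this yields a $C_{\le k}$-free subgraph containing $A$, avoiding $B$, in which every $\beta_j$ is blocked by a short path, and I would finally verify that it is \emph{maximal}, i.e.\ that re-adding any absent edge of $G$ creates a cycle of length at most $k$.

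The main obstacle is exactly this last point combined with feasibility: the gadgets must be built tightly enough that (i) $A$ never forces a short cycle, (ii) blocking $\beta_j$ is possible precisely when $C_j$ is satisfied, with no spurious short $p_j$--$q_j$ path arising from the interaction of several gadgets, (iii) variable consistency is enforced, and (iv) the subgraph obtained from a satisfying assignment is inclusion-maximal, not merely $C_{\le k}$-free. Keeping all unintended cycles strictly longer than $k$ while leaving room for the intended clause-blocking paths of length exactly $k-1$ is what makes the construction delicate; the length-$(k-2)$ spacers, as in the $P_k$-free case, are the device that achieves this and keeps the reduction uniform in $k$, so that together with membership in NP one obtains NP-completeness for every $k\ge 3$.
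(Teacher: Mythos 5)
Your NP-membership argument is fine and matches the paper's (girth testable in $\bigo(n^k)$, maximality checkable edge by edge since the property is monotone). The hardness part, however, has a genuine gap: it is a plan rather than a proof. The entire content of such a reduction lies in the concrete variable and clause gadgets and in verifying the four conditions you yourself list --- that the forced set $A$ creates no short cycle, that the only way to block a clause edge $\beta_j$ is via one of its intended literal-paths (no spurious $p_j$--$q_j$ path of length at most $k-1$ arises from the interaction of gadgets sharing literal edges or from the completion step), that variable consistency is enforced, and that a satisfying assignment really yields an \emph{inclusion-maximal} subgraph avoiding $B$ (or at least extends to one without ever being forced to pick up a $B$-edge). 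None of these is established: no gadget is actually specified, and the claims ``such a path must be one of the literal-paths of $C_j$'' and ``I would finally verify that it is maximal'' are exactly the statements that need proof. As written, the argument could not be checked or compiled into a correctness proof, and the delicate length bookkeeping (paths of length exactly $k-1$ for blocking versus all unintended cycles strictly longer than $k$, uniformly in $k\geq 3$) is precisely where such constructions tend to fail.

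It is also worth knowing that the paper avoids gadget engineering altogether: it reduces from the minimal Vertex Cover extension problem in triangle-free graphs (NP-complete by \cite{boros}), adding a universal vertex $c$ to the triangle-free instance $G$ so that edges of $G$ correspond bijectively to triangles of $G+c$, and setting $A:=E(G)$, $B:=\{cv\mid v\in S\}$; a maximal triangle-free subgraph containing $A$ and avoiding $B$ then corresponds exactly to a minimal vertex cover containing $S$, and the case $k>3$ follows by subdividing each edge of $G$ into a path of length $k-2$ whose edges are placed in $A$. If you want to pursue your 3-SAT route you would need to supply and analyse the gadgets in full; the vertex-cover route gives the theorem with a one-page argument.
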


\begin{proof}
First of all, checking if a set of edges induces a subgraph without cycles of length at most $k$ can be done in time $\bigo(n^k)$, where $n$ is the number of vertices of the input graph: it suffices to test all possible vertex subsets of size at most $k$.
Moreover, as the property is monotone, maximality can also be checked in polynomial time.

\paragraph{}
The reduction is from the minimal Vertex Cover extension in triangle-free graphs, whose NP-complete\-ness is proven in \cite[Proposition 2]{boros}.
The minimal Vertex Cover extension problem asks, given a graph $G$ and a subset $S$ of its vertex set, whether there exists a minimal Vertex Cover of $G$ containing $S$.
The NP-completeness statement in \cite{boros} does not mention the triangle-free case, but the reduction used in the proof produces only triangle-free input graphs, which proves the NP-completeness we need.

\paragraph{}
First, we shall prove the result for $k=3$.
Let $G=(V,E)$ be a triangle-free graph, let $S\subseteq V$.
Then we build another graph $G+c$ by adding a universal vertex $c$ to $G$.
Since $G$ is triangle-free, all triangles in $G+c$ must include $c$.
Moreover, for two vertices $u,~v\in V$, the three vertices $u$, $v$, and $c$ form a triangle in $G+c$ if and only if $uv\in E$.
Thus, there is a one-to-one correspondence between the edges of $G$ and the triangles of $G+c$.
Finally, define $A$ and $B$ two subsets of edges of $G+c$ as follows: $A:=E(G)$ and $B:=\{cv\mid v\in S\}$.
An illustration of this transformation is presented in Figure \ref{extAreteK3}.

\begin{figure}[ht]
	\centering
	\includegraphics[page=1]{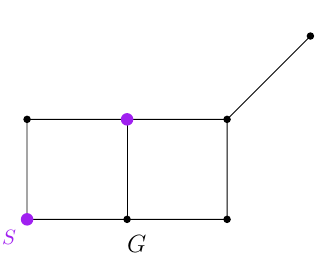}
	\hspace{1cm}
	\includegraphics[page=2]{extAreteK3.pdf}
	\caption{Transformation of an instance of Vertex Cover extension into an instance of the triangle-free edge extension problem. Big vertices are vertices in $S$, $A$ is the set of green edges, and $B$ is the set of red edges.}
	\label{extAreteK3}
\end{figure}

We will show that finding an edge extension without cycles of length at most 3 is equivalent to find a minimal Vertex Cover of $G$ containing $S$, or to find a maximal triangle-free subgraph of $G+c$ containing $A$ and avoiding $B$.

\paragraph{}
Let $H$ be a maximal triangle-free subgraph of $G+c$, containing $A$ and avoiding $B$ and let $X=\{v\in V \mid cv\notin E(H)\}$. Let us prove that $X$ is a minimal Vertex Cover of $G$ containing $S$.
Since $H$ is triangle-free, $X$ is a Vertex Cover of $G$.
Indeed, for each edge $e$ of $G$ there exists $v\in e$ such that $cv\notin E(H)$ (hence $v\in X$), otherwise $e\cup\{c\}$ induces a triangle in $H$.

Moreover, since $H$ does not contain any edge of $B$, $X$ contains $S$ for otherwise there would be $v\in S$ such that $cv\in E(H)$.

Finally, since $H$ is a maximal triangle-free subgraph of $G+c$, it is easy to see that the Vertex Cover $X$ is minimal in $G$: if there were $v \in X$ such that $X\setminus\{v\}$ is a Vertex Cover of $G$, then all neighbours of $v$ (except $c$) would have to be in $X$ to cover all the edges.
So, by definition of $X$, there is no neighbour $u$ of $v$ such that $cu\in E(H)$, hence we can add the edge $cv$ to $H$ without creating any triangle.
This contradicts the maximality of $H$.
\paragraph{}
Conversely, if we are given a minimal Vertex Cover of $G$ containing $S$, then the subgraph of $G+c$ given by $(V\cup\{c\},A\cup\{cv \mid v\notin S\})$ is triangle-free, and it is maximal in $G+c$ because the Vertex Cover is minimal in $G$.

This concludes the proof for $k=3$.

\paragraph{}
For greater values of $k$, the proof can easily be adapted by subdividing each edge of $G$ in $G+c$ into a path of length $k-2$ (that is, adding $k-3$ new vertices on each edge), and defining $A$ to contain all edges of those paths.
Such a transformation is illustrated in Figure \ref{extAreteK3-2}.
Note that for $k>3$, the smallest cycle in the subgraph induced by $A$ has length at least $3(k-2) > k$, ensuring that the solutions of the extension problem are not trivially inexistent.

\begin{figure}[ht]
	\centering
	\includegraphics[page=3]{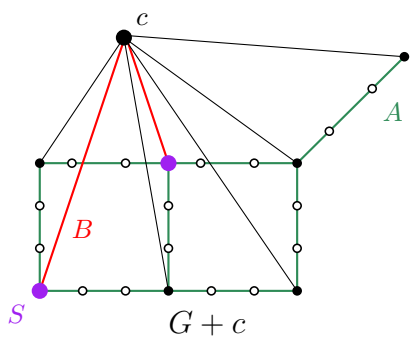}
	\caption{Transformation for $k=5$.}
	\label{extAreteK3-2}
\end{figure}
\end{proof}

\bibliographystyle{plain}
\bibliography{biblio}

\end{document}